\newtheorem{thm}{Theorem}
\newtheorem{prop}{Proposition}
\theoremstyle{definition}
\newtheorem{defn}{Definition}
\theoremstyle{remark}
\newtheorem{rem}{Remark}
\title[Hyperbolic quasiperiodic solutions]
{Existence theorem\\ for hyperbolic quasiperiodic solutions\\ of Lagrangian systems
on Riemannian manifolds}
\author[I. Parasyuk]
{Igor Parasyuk}
\address{Taras Shevchenko National  Univesity of
Kyiv\\ Volodymyrs'ka 64, Kyiv, 01033, Ukraine}
\email{pio@univ.kiev.ua}
\begin{document}

 \begin{abstract}We establish new sufficient conditions for the existence of classical
hyperbolic quasiperiodic solutions for natural Lagrangian system on Riemannian manifold
with time-quasiperiodic force function.\end{abstract}

\maketitle

\textbf{1. Introduction.} On Riemannian manifold $(\mathcal{M},\left\langle \cdot,\cdot\right\rangle )$
with Riemannian metric $\left\langle \cdot,\cdot\right\rangle $ and
Levi-Civita connection $\nabla$, consider a natural Lagrangian system
with Lagrangian density
\begin{gather}
L(t\omega,x,\dot{x})=\mathrm{K}(\dot{x})+W(t\omega,x).\label{eq:quat_Lagr}
\end{gather}
where $\mathrm{K}(\dot{x}):=\left\langle \dot{x},\dot{x}\right\rangle /2$
is the kinetic energy, $W(\cdot,\cdot)\text{:}\mathbb{T}^{k}\times\mathcal{M}\mapsto\mathbb{R}$
is a smooth force function, $\mathbb{T}^{k}=\mathbb{R}^{k}/(2\pi\mathbb{Z}^{k})$
is $k$-dimensional torus with natural angle coordinates $\varphi=(\varphi_{1},\ldots,\varphi_{k})|\bmod2\pi$,
and $\omega\in\mathbb{R}^{k}$ is a basis frequency vector with rationally
independent components.

In \cite{ParRus12}, there was developed a variational method for
detection of weak quasiperiodic solutions to system
with Lagrangian density \eqref{eq:quat_Lagr}. Earlier this method was applied for
analogous purposes in the case of Lagrangian systems in Euclidean
space \cite{Blo88,Blo89,Blo89_a,BerZha96,Maw98,ZakPar99_0,AyaBlo08,Kua12}, as well as for Lagrangian systems on Riemannian manifold
with nonpositive sectional curvature ~\cite{ZahPar99,ZahPar99_a}. In contrary
to the last two papers, the authors of \cite{ParRus12} managed to avoid the requirement
concerning nonpositiveness of sectional curvature, but instead they
impose additional conditions in terms of an auxiliary function. To
formulate these conditions we shall use the following notations: $\nabla_{\xi}$,
$\left\Vert \xi\right\Vert $, $\nabla V(x)$ and $H_{V}(x)$ are,
respectively, the covariant derivative along a tangent vector $\xi$,
the norm of this vector, the gradient of a smooth function $V(\cdot):\mathcal{M}\mapsto\mathbb{R}$
at a point $x$, and the Hessian at this point.

The above conditions are as follows.
There exists a bounded smooth
function $V(\cdot)\text{:}\mathcal{M}\mapsto\mathbb{R}$ satisfying
the following conditions:

\textbf{C1}: the set $\mathcal{D}:=\left\{ x\in\mathcal{M}:2\lambda_{V}(x)+\left\Vert \nabla V(x)\right\Vert ^{2}>0\right\} ,$
where
\[
\lambda_{V}(x):=\min\left\{ \left\langle H_{V}(x)\xi,\xi\right\rangle :\left\Vert \xi\right\Vert =1,\;\xi\in T_{x}\mathcal{M}\right\}
\]
stands for minimal eigenvalue of Hessian $H_{V}(x)$, is nonempty
and, for a noncritical value $v\in V(\mathcal{D})$, there exists
a bounded connected component $\Omega$ of sublevel set $V^{-1}\left(-\infty,v\right)$
such that $\bar{\Omega}:=\Omega\cup\partial\Omega\subseteq\mathcal{D}$;

\textbf{C2}:\textbf{ }the restriction of quadratic form $\left\langle H_{V}(x)\xi,\xi\right\rangle $
on $T_{x}\partial\Omega$ is positive definite for all $x\in\partial\Omega$,
and
\begin{gather}
\min_{x\in\bar{\Omega}}\left\{ \mu_{V}(x)-2K^{\ast}(x)\right\} >0,\label{eq:conv_mainineq}
\end{gather}
where
\begin{gather*}
\mu_{V}(x):=\min\left\{ \left\langle H_{V}(x)\xi,\xi\right\rangle -\frac{1}{2}\left\langle \nabla V(x),\xi\right\rangle ^{2}:\left\Vert \xi\right\Vert =1,\xi\in T_{x}\mathcal{M}\right\} ,
\end{gather*}
and $K^{\ast}(x)$ is maximal sectional (Riemannian) curvature over
two-dimensional tangent planes of tangent space $T_{x}\mathcal{M}$
(see \cite[Sect. 3.6]{GKM71} for the definition);

\textbf{C3}: any two points $x,y\in\Omega$ can be connected in $\bar{\mathcal{D}}$
by a geodesic segment of conformally equivalent Riemannian metric
$\left\langle \cdot,\cdot\right\rangle _{V}:=\mathrm{e}^{V}\left\langle \cdot,\cdot\right\rangle $.

Unfortunately,
Condition \textbf{C3} is of ``noncoefficient'' and nonlocal
character and as a result its verification may be quite difficult.
In the present paper we shall show that actually the verification of this
condition is needless, since it is fulfilled automatically. This fact allows as
to assert that Conditions\textbf{ C1}, \textbf{C2} together with certain
additional properties of function $W(\cdot,\cdot)$ guarantee
the existence of function $u(\cdot):\mathbb{T}^{k}\mapsto\mathcal{M}$
which is associated with a week Besicovitch quasiperiodic solution $x(t):=u(t\omega)$
to system with Lagrangian density $L(t\omega,x,\dot{x})$ (Theorem~\ref{thm:quat_1}).
This solution has the following extremal property: it is a limit of
sequence minimizing the functional
\begin{gather}
\mathcal{L}[x(\cdot)]:=\lim_{T\to\infty}\frac{1}{2T}\intop_{-T}^{T}L(t\omega,x(t),\dot{x}(t))\mathrm{d}t \label{eq:quat_Lagr_lim}
\end{gather}
on a metric space formed as a closure of the set of smooth uniformly
quasiperiodic functions $x(\cdot):\mathbb{R}\mapsto\Omega$ by the
pseudometric
\begin{gather}
d_{1}(x_{1}(\cdot),x_{2}(\cdot)):=\lim_{T\to\infty}\frac{1}{2T}\intop_{-T}^{T}\left[\left\Vert \dot{x}_{1}(t)-x_{2}(t)\right\Vert ^{2}+\rho^{2}(x_{1}(t),x_{2}(t))\right]\mathrm{d}t.\label{eq:quat_metric_d1}
\end{gather}
where $\rho(\cdot,\cdot):\mathcal{M}\times \mathcal{M}\mapsto \mathbb{R}$ is the distance function on  $(\mathcal{M},\left\langle \cdot,\cdot\right\rangle )$.
 Basing on the approach proposed
in~\cite{BerZha96} and developed in \cite{BloPen01}, we shall prove that
$x(\cdot)$ is classical uniformly quasiperiodic solution of the system
with Lagrangian density $L(t\omega,x,\dot{x})$ (Theorem~\ref{thm:quat_class_QPS}).
Besides, we establish that the system in variations along $x(t)$
is exponentially dichotomic on  whole real axis (Theorem~\ref{thm:quat_ExpDych}).

\medskip{}

\textbf{2. Existence theorem for weak quasiperiodic solution to Lagrangian
system on Riemannian manifold}.

By the Nash theorem~\cite{Nash56}, the manifold $\left(\mathcal{M},\left\langle \cdot,\cdot\right\rangle \right)$
admits a smooth isometric embedding $\iota:\mathcal{M}\mapsto\mathbb{E}^{n}$
into Euclidean space $\mathbb{E}^{n}=\left(\mathbb{R}^{n},\left(\cdot,\cdot\right)\right)$
of a dimension $n>m$, so that $\left\langle \xi,\eta\right\rangle =\left(\iota_{\ast}\xi,\iota_{\ast}\eta\right)$
for any $\xi,\eta\in T_{x}\mathcal{M}$. Here $\left(\cdot,\cdot\right)$
is standard dot product in $\mathbb{R}^{n}$, $\iota_{\ast}$ stands
for derivative of the embedding map $\iota$.

Denote by $D_{\omega}f(\varphi):=\frac{\mathrm{d}}{\mathrm{d}t}\bigl|_{t=0}f(\varphi+t\omega)$
the derivative in direction of vector $\omega$ for a function $f(\cdot):\mathbb{T}^{k}\mapsto\mathbb{E}^{n}$
and by $\nabla W(\varphi,x)$ the gradient of the function $W(\varphi,\cdot):\mathcal{M}\mapsto\mathbb{R}$
at point $x$ for fixed $\varphi\in\mathbb{T}^{k}$.
\begin{prop}
\label{prop:quat_0_1}A function $t\mapsto x(t):=u(t\omega)$ of class
$\mathrm{C}^{2}\!\left(\mathbb{R};\mathcal{M}\right)$ where $u(\cdot)\in\mathrm{C}\!\left(\mathbb{T}^{k}\!\mapsto\!\mathcal{M}\right)$
and $D_{\omega}^{2}u(\cdot)\in\mathrm{C}\!\left(\mathbb{T}^{k}\!\mapsto\!\mathcal{M}\right)$
is a classical quasiperiodic solution of Lagrangian system on $\left(\mathcal{M},\left\langle \cdot,\cdot\right\rangle \right)$
with Lagrangian density \eqref{eq:quat_Lagr} iff for any mapping
$h(\cdot)\in\mathrm{C}\!\left(\mathbb{T}^{k};T\mathcal{M}\right)$
such that $D_{\omega}\iota_{\ast}h(\cdot)\in\mathrm{C}\!\left(\mathbb{T}^{k};\mathbb{E}^{n}\right)$
and $h(\varphi)\in T_{u(\varphi)}\mathcal{M}$ at every point $\varphi\in\mathbb{T}^{k}$
the function $u(\cdot)$ satisfies the equality
\begin{gather}
\intop_{\mathbb{T}^{k}}\left[\left(D_{\omega}\iota\circ u(\varphi),D_{\omega}\iota_{\ast}h(\varphi)\right)+\left(\iota_{\ast}\nabla W(\varphi,u(\varphi)),\iota_{\ast}h(\varphi)\right)\right]\mathrm{d}\varphi=0\label{eq:quad_mainproperty}
\end{gather}
where $\mathrm{d}\varphi:=\mathrm{d}\varphi_{1}\wedge\ldots\wedge\mathrm{d}\varphi_{k}$
is differential volume form on torus $\mathbb{T}^{k}$.\end{prop}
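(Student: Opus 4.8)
The plan is to convert the integral identity \eqref{eq:quad_mainproperty} into a pointwise one by integration by parts and to match the resulting relation with the classical Euler--Lagrange equation written in the ambient space $\mathbb{E}^{n}$. First I would record what ``classical solution'' means in the embedded picture. For the natural Lagrangian \eqref{eq:quat_Lagr} the Euler--Lagrange equation is the forced geodesic equation $\nabla_{\dot{x}}\dot{x}=\nabla W(t\omega,x)$, and the Gauss decomposition $\frac{\mathrm{d}^{2}}{\mathrm{d}t^{2}}(\iota\circ x)=\iota_{\ast}\nabla_{\dot{x}}\dot{x}+\mathrm{II}(\dot{x},\dot{x})$, whose second fundamental form $\mathrm{II}$ is normal to $\iota_{\ast}T_{x}\mathcal{M}$, shows this equation to be equivalent to its tangential projection
\begin{equation}
\mathrm{Pr}_{u(\varphi)}\,D_{\omega}^{2}(\iota\circ u)(\varphi)=\iota_{\ast}\nabla W(\varphi,u(\varphi)),\label{eq:pf_strong}
\end{equation}
where $\mathrm{Pr}_{u(\varphi)}$ is orthogonal projection of $\mathbb{E}^{n}$ onto $\iota_{\ast}T_{u(\varphi)}\mathcal{M}$ and I have used $\frac{\mathrm{d}^{2}}{\mathrm{d}t^{2}}(\iota\circ x)=D_{\omega}^{2}(\iota\circ u)(t\omega)$. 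Since $x(t)=u(t\omega)$ is a solution exactly when \eqref{eq:pf_strong} holds along the orbit $\{t\omega:t\in\mathbb{R}\}$, and this orbit is dense in $\mathbb{T}^{k}$ (rational independence of $\omega$) while both sides of \eqref{eq:pf_strong} are continuous by the regularity hypotheses on $u$, the curve $x(\cdot)$ is a classical quasiperiodic solution iff the continuous residual
\[
r(\varphi):=\iota_{\ast}\nabla W(\varphi,u(\varphi))-\mathrm{Pr}_{u(\varphi)}\,D_{\omega}^{2}(\iota\circ u)(\varphi)
\]
vanishes identically on $\mathbb{T}^{k}$.

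Next I would reduce \eqref{eq:quad_mainproperty} to $r$. The operator $D_{\omega}=\sum_{j}\omega_{j}\partial_{\varphi_{j}}$ is skew-adjoint on $\mathrm{L}^{2}(\mathbb{T}^{k})$ because $\int_{\mathbb{T}^{k}}D_{\omega}f\,\mathrm{d}\varphi=0$ for every $\mathrm{C}^{1}$ function $f$ by periodicity. Applying this to $f=(D_{\omega}(\iota\circ u),\iota_{\ast}h)$, which is of class $\mathrm{C}^{1}$ along $\omega$ since both factors are (by the assumptions $D_{\omega}^{2}u\in\mathrm{C}$ and $D_{\omega}\iota_{\ast}h\in\mathrm{C}$), gives $\int(D_{\omega}(\iota\circ u),D_{\omega}\iota_{\ast}h)\,\mathrm{d}\varphi=-\int(D_{\omega}^{2}(\iota\circ u),\iota_{\ast}h)\,\mathrm{d}\varphi$. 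Since $\iota_{\ast}h(\varphi)\in\iota_{\ast}T_{u(\varphi)}\mathcal{M}$ is tangent, I may replace $D_{\omega}^{2}(\iota\circ u)$ by $\mathrm{Pr}_{u(\varphi)}D_{\omega}^{2}(\iota\circ u)$ inside the inner product, so that \eqref{eq:quad_mainproperty} becomes $\int_{\mathbb{T}^{k}}\bigl(r(\varphi),\iota_{\ast}h(\varphi)\bigr)\,\mathrm{d}\varphi=0$. The implication ``classical $\Rightarrow$ \eqref{eq:quad_mainproperty}'' is then immediate: if $r\equiv0$ the integral vanishes for every admissible $h$.

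The converse is the heart of the matter, a fundamental lemma of the calculus of variations: I must show that vanishing of $\int(r,\iota_{\ast}h)\,\mathrm{d}\varphi$ for all admissible $h$ forces $r\equiv0$. The obstacle is a regularity mismatch — the naive choice $h=\iota_{\ast}^{-1}r$ need not be admissible, since admissibility demands $D_{\omega}\iota_{\ast}h$ continuous, whereas $D_{\omega}r$ would involve the uncontrolled $D_{\omega}^{3}(\iota\circ u)$. I would overcome this by mollifying along the flow: set $\hat{r}_{\varepsilon}(\varphi):=\int_{\mathbb{R}}\rho_{\varepsilon}(s)\,r(\varphi-s\omega)\,\mathrm{d}s$ with a standard mollifier $\rho_{\varepsilon}$, and then project, $h_{\varepsilon}(\varphi):=\iota_{\ast}^{-1}\mathrm{Pr}_{u(\varphi)}\hat{r}_{\varepsilon}(\varphi)$. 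Mollification makes $D_{\omega}\hat{r}_{\varepsilon}$ continuous, and $\varphi\mapsto\mathrm{Pr}_{u(\varphi)}$ is $\mathrm{C}^{1}$ along $\omega$ (the embedded point $\iota\circ u$ is $\mathrm{C}^{1}$ along the flow and $\mathcal{M}$ is smoothly embedded), so $D_{\omega}\iota_{\ast}h_{\varepsilon}$ is continuous and each $h_{\varepsilon}$ is admissible with $\iota_{\ast}h_{\varepsilon}=\mathrm{Pr}_{u}\hat{r}_{\varepsilon}$.

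Finally I would pass to the limit. Since $r$ is continuous on the compact torus and already tangent, $\hat{r}_{\varepsilon}\to r$ and hence $\mathrm{Pr}_{u}\hat{r}_{\varepsilon}\to r$ uniformly as $\varepsilon\to0$; letting $\varepsilon\to0$ in $0=\int(r,\iota_{\ast}h_{\varepsilon})\,\mathrm{d}\varphi$ then yields $\int_{\mathbb{T}^{k}}\left\Vert r(\varphi)\right\Vert ^{2}\,\mathrm{d}\varphi=0$, whence $r\equiv0$ by continuity. Combined with the characterization of the first paragraph, this closes the equivalence. I expect the mollification-and-projection construction of admissible test fields to be the only delicate step; the rest is integration by parts together with density of the orbit and continuity.
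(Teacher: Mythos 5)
Your proof is correct and follows essentially the same route as the paper: both reduce the classical Lagrange equation, via the tangential projection onto $\iota_{\ast}T_{u(\varphi)}\mathcal{M}$ and density of the orbit $\{t\omega\}_{t\in\mathbb{R}}$ in $\mathbb{T}^{k}$, to the pointwise identity $P_{u(\varphi)}D_{\omega}^{2}\iota\circ u(\varphi)=\iota_{\ast}\nabla W(\varphi,u(\varphi))$ on all of $\mathbb{T}^{k}$, and then pass to \eqref{eq:quad_mainproperty} by integration by parts along $\omega$. The only difference is that you make explicit the fundamental-lemma step in the converse direction (building admissible test fields by mollifying the residual along the flow and projecting), which the paper leaves implicit inside its chain of equivalences; this added detail is sound and does not change the approach.
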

\begin{proof}
By definition of classical solution,
\begin{gather}
\nabla_{\dot{x}(t)}\dot{x}(t)=\nabla W(t\omega,x(t))\quad\forall t\in\mathbb{R}.\label{eq:quat_LagrEquat}
\end{gather}
If for any point $x\in\mathcal{M}$, we define the orthogonal projection
operator
\begin{gather*}
P_{x}:T_{\iota(x)}\mathbb{E}^{n}\mapsto\iota_{\ast}T_{x}\mathcal{M},
\end{gather*}
then by the well-known property of the Levi--Civita connection \cite[Sect. 3.5]{GKM71}
for any smooth field of tangent vectors $\xi(t)$ along $x(t)$ we
have
\begin{gather}
\iota_{\ast}\nabla_{\dot{x}(t)}\xi(t)=P_{x(t)}\frac{\mathrm{d}}{\mathrm{d}t}\iota_{\ast}\xi(t).\label{eq:quat_covder}
\end{gather}
Now taking into account rational independence of frequency vector components
the equality \eqref{eq:quat_LagrEquat} implies the chain of equivalent
interrelations
\begin{gather}
P_{x(t)}\frac{\mathrm{d}}{\mathrm{d}t}\iota_{\ast}\dot{x}(t)=\iota_{\ast}\nabla W(t\omega,x(t))\quad\forall t\in\mathbb{R}\quad\Leftrightarrow\nonumber \\
P_{u(\varphi)}D_{\omega}^{2}\iota\circ u(\varphi)=\iota_{\ast}\nabla W(\varphi,u(\varphi))\quad\forall\varphi\in\mathbb{T}^{k}\quad\Leftrightarrow\nonumber \\
\intop_{\mathbb{T}^{k}}\left(D_{\omega}^{2}\iota\circ u(\varphi),P_{u(\varphi)}v(\varphi)\right)\mathrm{d}\varphi=\intop_{\mathbb{T}^{k}}\left(\iota_{\ast}\nabla W(\varphi,u(\varphi)),P_{u(\varphi)}v(\varphi)\right)\mathrm{d}\varphi\nonumber \\
\forall v(\cdot)\in\mathrm{C}\!\left(\mathbb{T}^{k};\mathbb{E}^{n}\right):\; D_{\omega}v(\cdot)\in\mathrm{C}\!\left(\mathbb{T}^{k};\mathbb{E}^{n}\right)\quad\Leftrightarrow\nonumber \\
\intop_{\mathbb{T}^{k}}\left(D_{\omega}^{2}\iota\circ u(\varphi),\iota_{\ast}h(\varphi)\right)\mathrm{d}\varphi=\intop_{\mathbb{T}^{k}}\left(\iota_{\ast}\nabla W(\varphi,u(\varphi)),\iota_{\ast}h(\varphi)\right)\mathrm{d}\varphi\label{eq:quat_D2om}\\
\forall h(\cdot)\in\mathrm{C}\!\left(\mathbb{T}^{k};T\mathcal{M}\right):\; D_{\omega}\iota_{\ast}h(\cdot)\in\mathrm{C}\!\left(\mathbb{T}^{k};\mathbb{E}^{n}\right),\; h(\varphi)\in T_{u(\varphi)}\mathcal{M}\;\forall\varphi\in\mathbb{T}^{k}.\nonumber
\end{gather}
Integrating by parts of the left-hand side of equality \eqref{eq:quat_D2om}
we get the required result.
\end{proof}
Now, basing on Proposition~\ref{prop:quat_0_1}, introduce the notion
of weak quasiperiodic solution to the system with Lagrangian density~\eqref{eq:quat_Lagr}
on Riemannian manifold $\mathcal{M}$. Before formulating the corresponding
definition let us recall definitions of a number of functional spaces
and corresponding objects.

Denote by $\mathrm{H}(\mathbb{T}^{k};\mathbb{E}^{n})$ the space of
$\mathbb{E}^{n}$-valued functions on torus $\mathbb{T}^{k}$ Lebesgue
integrable by the square of standard Euclidean norm $\left\Vert \cdot\right\Vert =\sqrt{(\cdot,\cdot)}$
in $\mathbb{E}^{n}$ (note that in view of the isometricity of embedding
$\iota$ the norms of vectors $\xi\in T_{x}\mathcal{M}$ and $\iota_{\ast}\xi\in T_{\iota(x)}\mathbb{E}^{n}\sim\mathbb{E}^{n}$
in corresponding spaces are equal, thus we use the same symbols for
these norms). For elements of $\mathrm{H}(\mathbb{T}^{k};\mathbb{E}^{n})$,
one can define a dot product $\left(\cdot,\cdot\right)_{0}:=(2\pi)^{-k}\intop_{\mathbb{T}^{k}}(\cdot,\cdot)\mathrm{d}\varphi$
as well as seminorm $\left\Vert \cdot\right\Vert _{0}=\sqrt{(\cdot,\cdot)_{0}^{2}}$.
By $\mathrm{H}_{\omega}^{1}(\mathbb{T}^{k};\mathbb{E}^{n})$ denote
the space of functions $f(\cdot)\in\mathrm{H}(\mathbb{T}^{k};\mathbb{E}^{n})$
each of which has weak (generalized) Sobolev derivative $D_{\omega}f(\cdot)\in\mathrm{H}(\mathbb{T}^{k};\mathbb{E}^{n})$
in direction of $\omega$, i.e. $\left(f(\cdot),D_{\omega}g(\cdot)\right)_{0}=-\left(D_{\omega}f(\cdot),g(\cdot)\right)_{0}$ for any
continuously differentiable function $g(\cdot):\mathbb{T}^{k}\mapsto\mathbb{E}^{n}$.
In the space $\mathrm{H}_{\omega}^{1}(\mathbb{T}^{k};\mathbb{E}^{n})$
one can naturally define a dot product $(\cdot,\cdot)_{1}:=\left(\cdot,\cdot\right)_{0}+\left(D_{\omega}\cdot,D_{\omega}\cdot\right)_{0}$
and a seminorm $\left\Vert \cdot\right\Vert _{1}$ generated by it.
After we identify elements equal a.e., both spaces $\left(\mathrm{H}(\mathbb{T}^{k};\mathbb{E}^{n}),\left(\cdot,\cdot\right)_{0}\right)$,
$\left(\mathrm{H}_{\omega}^{1}(\mathbb{T}^{k};\mathbb{E}^{n}),\left(\cdot,\cdot\right)_{1}\right)$
become Hilbertian ones (see, e.g., \cite{Sam91,BloPen01}).

Next, for arbitrary bounded set $\mathcal{A}\subset\mathcal{M}$,
denote by $\mathcal{S}_{\mathcal{A}}$ the space of smooth functions
$u(\cdot):\mathbb{T}^{k}\mapsto\mathcal{A}$.
\begin{defn}
A function $u(\cdot):\mathbb{T}^{k}\mapsto\mathcal{M}$ is said to
be of class $\mathcal{H}_{\mathcal{A}}^{1}$, if $\iota\circ u(\cdot)$
is a strong limit in $\mathrm{H}_{\omega}^{1}(\mathbb{T}^{k};\mathbb{E}^{n})$
of a sequence $\left\{ \iota\circ u_{j}(\cdot)\right\} $, where $u_{j}(\cdot)\in\mathcal{S}_{\mathcal{A}}$,
$j=1,2,\ldots$.
\end{defn}
Note that for compact set $\bar{\mathcal{A}}$$\subset\mathcal{M}$
there exist positive constants $c$ and $C$ such that
\begin{gather*}
c\left\Vert \iota(x_{1})-\iota(x_{2})\right\Vert \le\rho(x_{1},x_{2})\le C\left\Vert \iota(x_{1})-\iota(x_{2})\right\Vert \quad\forall x_{1},x_{2}\in\bar{\mathcal{A}}.
\end{gather*}
Without loss of generality one can regard functions of class $\mathcal{H}_{\mathcal{A}}^{1}$
as taking values in $\bar{\mathcal{A}}$.
\begin{defn}
\label{def:quad_v_f} A mapping $h(\cdot):\mathbb{T}^{k}\mapsto T\mathcal{M}$
is called a vector field along a mapping $u(\cdot)\in\mathcal{H}_{\mathcal{A}}^{1}$
determined by a sequence $\left\{ u_{j}(\cdot)\in\mathcal{S}_{\mathcal{A}}\right\} $
if $\iota_{\ast}h(\cdot)$ is a strong limit in $\mathrm{H}_{\omega}^{1}\left(\mathbb{T};\mathbb{E}^{n}\right)$
of a sequence $\left\{ \iota_{\ast}h_{j}(\cdot)\right\} $, where
$\left\{ h_{j}(\cdot):\mathbb{T}^{k}\mapsto T\mathcal{M}\right\} $
is a sequence of smooth mappings such that
\[
h_{j}(\varphi)\in T_{u_{j}(\varphi)}\mathcal{M}\quad\forall j\in\mathbb{N},\quad\sup_{j\in\mathbb{N},\varphi\in\mathbb{T}^{k}}\left\Vert h_{j}(\varphi)\right\Vert <\infty.
\]

\end{defn}
By the well-known Riesz–Fischer theorem, a formal sum $\sum_{\mathbf{n}\in\mathbb{Z}^{k}}f_{\mathbf{n}}\mathrm{e}^{\mathrm{i}(\mathbf{n},\varphi)}$
is the Fourier series of a function $f(\cdot)\in\mathrm{H}(\mathbb{T}^{k};\mathbb{E}^{n})$
iff the series $\sum_{\mathbf{n}\in\mathbb{Z}^{k}}\left\Vert f_{\mathbf{n}}\right\Vert ^{2}$
is convergent. On the other hand, by the Riesz-Fischer-Besicovitch
theorem \cite[p. 110]{Bez55}, to the formal sum $\sum_{\mathbf{n}\in\mathbb{Z}^{k}}f_{\mathbf{n}}\mathrm{e}^{\mathrm{i}(\mathbf{n},\omega)t}$,
one can assign (nonuniquely) a function $t\mapsto\mathfrak{f}(t)$
of class $L_{\mathrm{loc}}^{2}(\mathbb{R};\mathbb{E}^{n})$ for which
the above sum is its Fourier series and which represents an element
of the space of Besicovitch quasiperiodic functions $\mathrm{B}^{2}\!\left(\mathbb{R};\mathbb{E}^{n}\right)$.
In view of the form of both Fourier series we generally use the notation
$\mathfrak{f}(t)=f(t\omega)$ for the quasiperiodic function $\mathfrak{f}(\cdot)$,
associated with a function $f(\cdot)$ on torus. However, one should
keep in mind the following factors. Firstly, when proving the Riesz-Fischer-Besicovitch
theorem a Besicovitch function with prescribed Fourier series is constructed
irrespectively to restriction of concrete associated function to line$\left\{ \varphi=t\omega\right\} _{t\in\mathbb{R}}$,
and, besides, two functions from $L_{\mathrm{loc}}^{2}(\mathbb{R};\mathbb{E}^{n})$
representing an element of the space $\mathrm{B}^{2}\!\left(\mathbb{R};\mathbb{E}^{n}\right)$
(i.e. possessing common Fourier series) may differ by everywhere nonvanishing
function. Secondly, if we take a function $f(\cdot)\in\mathrm{H}(\mathbb{T}^{k};\mathbb{E}^{n})$,
then we can only assert that the function $t\mapsto f(\varphi+t\omega)$
belongs to $\mathrm{B}^{2}\!\left(\mathbb{R};\mathbb{E}^{n}\right)$
for almost every point $\varphi\in\mathbb{T}^{k}$, but a priori it's
unknown whether the same is true for $\varphi=0$. To give informal
sense to the equality $\mathfrak{f}(t)=f(t\omega)$, conversely, having
a function $\mathfrak{f}(\cdot)\in\mathrm{B}^{2}\!\left(\mathbb{R};\mathbb{E}^{n}\right)$,
we think of associated function $f(\cdot)\in\mathrm{H}(\mathbb{T}^{k};\mathbb{E}^{n})$
as being redefined on the line $\left\{ \varphi=t\omega\right\} _{t\in\mathbb{R}}$
(which has zero Lebesgue measure) in such a way that the above equality
is actually fulfilled \cite[Sect. 1.5]{Sam91}.

Now let $f(\cdot)=\iota\circ u(\cdot)$, where $u(\cdot)\in\mathcal{H}_{\mathcal{A}}^{1}$.
To this function, one can assign an associated $\mathbb{E}^{n}$-valued
Besicovitch quasiperiodic function. But in view of the above mentioned
circumstances the following question needs to be clarified: does there
exist a Besicovitch quasiperiodic function associated with $\iota\circ u(\cdot)$
and taking values in $\iota(\bar{\mathcal{A}})$? I.e., does there
exist a Besicovitch quasiperiodic function associated with $u(\cdot)\in\mathcal{H}_{\mathcal{A}}^{1}$
and taking values in $\bar{\mathcal{A}}$? Basing on completeness of Marcinkiewicz spaces (see, \cite{Mar39,BohFol45,Dan06}) one can show that the answer
to the question is affirmative.

Now in view of Proposition~\ref{prop:quat_0_1} we can introduce
the following definition.
\begin{defn}
\label{def:weakQPS}A quasiperiodic Besicovitch function $t\mapsto u(t\omega)$
associated with function $u(\cdot)\in\mathcal{H}_{\mathcal{A}}^{1}$,
where $\mathcal{A}\subseteq\mathcal{M}$ is a bounded set, is called
a weak solution of Lagrangian system on $\mathcal{M}$ with Lagrangian
density \eqref{eq:quat_Lagr} if
\begin{gather}
\left(D_{\omega}\iota\circ u(\cdot),D_{\omega}\iota_{\ast}h(\cdot)\right)_{0}+\left(\iota_{\ast}\nabla W(\cdot,u(\cdot)),\iota_{\ast}h(\cdot)\right)_{0}=0\label{eq:quat_1varJ0}
\end{gather}
 for any vector field $h(\cdot)$ along $u(\cdot)$.

The function $u(\cdot)$ from this definition can be interpreted as
an extremal of functional
\begin{gather*}
J[u(\cdot)]:=\intop_{\mathbb{T}^{k}}\left[\frac{1}{2}\left\Vert D_{\omega}\iota\circ u(\varphi)\right\Vert ^{2}+W(\varphi,u(\varphi))\right]\mathrm{d}\varphi
\end{gather*}
on set $\mathcal{H}_{\mathcal{A}}^{1}$, and the corresponding weak
solution $t\mapsto u(t\omega)$ as an extremal of functional $\mathcal{L}$
\eqref{eq:quat_Lagr_lim} on set of Besicovitch quasiperiodic functions
associated with functions of class $\mathcal{H}_{\mathcal{A}}^{1}$.
\end{defn}
Now let us prove an existence theorem for weak quasiperiodic solution.
\begin{thm}
\label{thm:quat_1} Let there exist a bounded smooth function $V(\cdot):\mathcal{M}\mapsto\mathbb{R}$
satisfying conditions\textbf{ C1},\textbf{C2}, as well as the inequalities

\begin{gather*}
\lambda_{W}(\varphi,x)+\frac{1}{2}\left\langle \nabla W(\varphi,x),\nabla V(x)\right\rangle >0\quad\forall(\varphi,x)\in\mathbb{T}^{k}\times\bar{\Omega},\\
\left\langle \nabla W(\varphi,x),\nabla V(x)\right\rangle >0\quad\forall(\varphi,x)\in\mathbb{T}^{k}\times\partial\Omega,
\end{gather*}
where $\lambda_{W}(\varphi,x):=\min\left\{ \left\langle H_{W}(\varphi,x)\xi,\xi\right\rangle :\xi\in T_{x}\mathcal{M},\left\Vert \xi\right\Vert =1\right\} $,
$H_{W}(\varphi,x)$ is the Hessian of function $W(\varphi,\cdot):\mathcal{M}\mapsto\mathbb{R}$
at point $x$ for $\varphi\in\mathbb{T}^{k}$. Then the system with
Lagrangian density \eqref{eq:quat_Lagr} has a weak quasiperiodic
solution associated with function of class $\mathcal{H}_{\Omega}^{1}$.\end{thm}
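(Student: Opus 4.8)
The plan is to obtain the weak solution as a minimizer of the functional
$J[u]=\int_{\mathbb{T}^{k}}\left[\tfrac12\|D_{\omega}\iota\circ u(\varphi)\|^{2}+W(\varphi,u(\varphi))\right]\mathrm{d}\varphi$
over the class $\mathcal{H}^{1}_{\Omega}$, exploiting that, by the Definition above, an interior critical point of $J$ satisfies precisely the weak Euler--Lagrange identity \eqref{eq:quat_1varJ0}. First I would record coercivity from below: since every $u\in\mathcal{H}^{1}_{\bar\Omega}$ takes values in the compact set $\iota(\bar\Omega)$ and $W$ is continuous, the potential term is bounded, while the kinetic term is nonnegative; hence $\inf J>-\infty$. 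Fix a minimizing sequence $\{u_{j}\}\subset\mathcal{S}_{\Omega}$. Boundedness of $J[u_{j}]$ together with the compact range of $\iota\circ u_{j}$ yields a uniform bound on $\|\iota\circ u_{j}\|_{1}$ in $\mathrm{H}^{1}_{\omega}(\mathbb{T}^{k};\mathbb{E}^{n})$.

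The decisive structural point is that conditions \textbf{C1}, \textbf{C2} and the two $W$-inequalities are convexity conditions relative to the conformally equivalent metric $\langle\cdot,\cdot\rangle_{V}=\mathrm{e}^{V}\langle\cdot,\cdot\rangle$. Indeed, the quantity $\mu_{V}(x)-2K^{\ast}(x)$ in \eqref{eq:conv_mainineq} is exactly the positivity required of the second variation (Jacobi form) of the energy functional once the sectional curvature $K^{\ast}$ is accounted for, so that the kinetic part of $J$ becomes strictly convex along $\langle\cdot,\cdot\rangle_{V}$-geodesics on $\bar\Omega$; the inequality $\lambda_{W}(\varphi,x)+\tfrac12\langle\nabla W(\varphi,x),\nabla V(x)\rangle>0$ renders the potential contribution convex as well. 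Since geodesic convexity of $\bar\Omega$ in $\langle\cdot,\cdot\rangle_{V}$ (formerly \textbf{C3}) is automatic given \textbf{C1}, \textbf{C2}, these geodesics stay in $\bar{\mathcal D}$ and the convexity is well posed. I would use this to show $J$ is (strictly) convex, hence weakly lower semicontinuous; more usefully, strict convexity forces the minimizing sequence to be Cauchy, so that $\iota\circ u_{j}$ converges \emph{strongly} in $\mathrm{H}^{1}_{\omega}$ to some $\iota\circ u_{\ast}$ with $u_{\ast}\in\mathcal{H}^{1}_{\bar\Omega}$ and $J[u_{\ast}]=\inf J$. This convexity route is essential because, owing to small divisors $(\mathbf{n},\omega)$, the embedding $\mathrm{H}^{1}_{\omega}\hookrightarrow\mathrm{H}$ is not compact, so the usual weak-compactness argument is unavailable and it is convexity, not compactness, that must supply both lower semicontinuity and the passage to the limit in the nonlinear term $W(\varphi,u)$.

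The crux is then to upgrade $u_{\ast}\in\mathcal{H}^{1}_{\bar\Omega}$ to a map valued in the open set $\Omega$, since only for an interior minimizer are arbitrary vector fields $h$ along $u_{\ast}$ admissible variations. I would analyze $\varphi\mapsto V(u_{\ast}(\varphi))$ by an averaging/maximum-principle argument: along the extremal one has $D_{\omega}^{2}(V\circ u_{\ast})=\langle H_{V}(u_{\ast})D_{\omega}u_{\ast},D_{\omega}u_{\ast}\rangle+\langle\nabla V(u_{\ast}),\nabla W(\varphi,u_{\ast})\rangle$, whose torus average vanishes. Were $V\circ u_{\ast}$ to attain the boundary value $v$ on a set of positive measure, the inward variation $h=-\nabla V(u_{\ast})$ would make the potential part of the first variation equal $-(\iota_{\ast}\nabla W,\iota_{\ast}\nabla V)_{0}<0$ on the contact set, by the boundary hypothesis $\langle\nabla W,\nabla V\rangle>0$ on $\partial\Omega$, while the convexity from \textbf{C2} controls the kinetic part; this gives a strictly negative first variation, contradicting minimality. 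Hence $u_{\ast}$ maps into $\Omega$.

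With $u_{\ast}$ valued in $\Omega$, every vector field $h$ along $u_{\ast}$ is admissible, and stationarity $\delta J[u_{\ast}](h)=0$ is exactly the identity \eqref{eq:quat_1varJ0}; the associated Besicovitch function $t\mapsto u_{\ast}(t\omega)$, which exists and is valued in $\bar\Omega$ by the Marcinkiewicz-completeness remark, is the required weak quasiperiodic solution. I expect the main obstacle to lie in the second and third paragraphs: rigorously extracting the conformal-metric convexity with the curvature term $K^{\ast}$ correctly inserted into the second variation, and then ruling out boundary contact in the Sobolev/Besicovitch framework where standard compactness is absent.
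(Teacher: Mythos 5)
Your overall strategy --- minimize $J$ over $\mathcal{H}^{1}_{\Omega}$, use convexity in the conformal metric $\langle\cdot,\cdot\rangle_{V}$ rather than compactness (which indeed fails for $\mathrm{H}^{1}_{\omega}$ because of small divisors), and read off \eqref{eq:quat_1varJ0} from stationarity --- is the same as the paper's, which in turn imports the minimization machinery (the connecting mapping $\chi$, the convexity estimate \eqref{eq:quat_convexL}, strong convergence of the minimizing sequence) from \cite{ParRus12}. But there is a genuine gap at exactly the point you wave through: you write that geodesic convexity of $\Omega$ in $\langle\cdot,\cdot\rangle_{V}$ ``(formerly \textbf{C3}) is automatic given \textbf{C1}, \textbf{C2}.'' That claim is the entire new content of this theorem relative to \cite{ParRus12}, and it is not automatic in any soft sense --- it is what most of the paper's proof is devoted to establishing. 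Without it you cannot even define the connecting mapping $\chi(s,x,y)$ whose existence underlies the strict convexity of $s\mapsto J[\chi(s,u_{1},u_{2})]$, so your Cauchy-sequence argument has no foundation.

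The paper's argument for this step is a connectedness (open--closed) argument that you would need to reproduce: fix $x\in\Omega$, let $\mathcal{Z}_{x}=\{\xi\in T_{x}\mathcal{M}:\exp^{V}_{x}(s\xi)\in\Omega\ \forall s\in[0,1]\}$ and $\Xi=\exp^{V}_{x}(\mathcal{Z}_{x})$. Openness of $\Xi$ uses that \eqref{eq:conv_mainineq} forces the sectional curvature of $\nabla^{V}$ to be nonpositive on $\Omega$, whence by the Morse--Schoenberg theorem $\exp^{V}_{x}$ has no conjugate points on $\mathcal{Z}_{x}$ and is a local diffeomorphism there. Closedness of $\Xi$ in $\Omega$ is the delicate part: if a sequence $\exp^{V}_{x}(\xi_{k})\to x_{\ast}\in\Omega$ with $\xi_{k}\to\xi_{\ast}\in\partial\mathcal{Z}_{x}$, one uses the uniform convexity $\frac{\mathrm{d}^{2}}{\mathrm{d}s^{2}}\hat{V}(s\xi_{k})\ge\sigma>0$ of $\hat{V}:=\exp\circ V\circ\exp^{V}_{x}$ along $\nabla^{V}$-geodesics (a consequence of $\bar{\Omega}\subset\mathcal{D}$) to show that a geodesic which reaches $\partial\Omega$ at some interior parameter $s_{\ast}<1$ must overshoot, i.e.\ $V(x_{\ast})>v$, contradicting $x_{\ast}\in\Omega$. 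Your second paragraph also misattributes the role of \eqref{eq:conv_mainineq}: it is not directly a second-variation positivity statement for $J$ but the condition guaranteeing $K^{V}\le 0$ via the conformal-change curvature formula, from which the convexity estimate \eqref{eq:quat_convexL} is then derived in \cite{ParRus12}. Your third paragraph (interiority via an inward variation $h=-\nabla V(u_{\ast})$) is a plausible sketch and is consistent in spirit with how the strict boundary inequality $\langle\nabla W,\nabla V\rangle>0$ on $\partial\Omega$ is used, but as written it presupposes pointwise regularity of $V\circ u_{\ast}$ that an $\mathcal{H}^{1}_{\Omega}$ function need not have; the paper instead gets interiority from the strictness of all hypotheses (Remark~\ref{rem:quat_value_u}) and the statement $\inf\{J[\mathcal{S}_{\Omega+\delta}]\}=J[u(\cdot)]$.
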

\begin{proof}
At first, let us show that any two points of the set $\Omega:=V^{-1}((-\infty,v))$
can be connected in $\Omega$ by a geodesic segment of conformally
equivalent metric $\left\langle \cdot,\cdot\right\rangle _{V}:=\mathrm{e}^{V}\left\langle \cdot,\cdot\right\rangle $.
Denote by $\nabla^{V}$ the Levi--Civita connection of Riemannian
manifold $\left(\mathcal{M},\left\langle \cdot,\cdot\right\rangle _{V}\right)$,
and by $\exp_{x}^{V}(\cdot)$ the exponential mapping at $x$ associated
with $\nabla^{V}$. Let $x\in\Omega$ and thus $v_{0}:=V(x)<v$. Consider
the open set
\begin{gather*}
\mathcal{Z}_{x}=\left\{ \xi\in T_{x}\mathcal{M}:\exp_{x}^{V}(s\xi)\in\Omega\;\forall s\in[0,1]\right\} .
\end{gather*}
In \cite[p. 10]{ParRus12}, relying on well-known formula of sectional
curvature for conformally equivalent metric \cite[Sect. 3.6]{GKM71},
it is observed that the inequality~\eqref{eq:conv_mainineq} yields
nonpositiveness of sectional curvature for connection $\nabla^{V}$
at any point of $\Omega$ and along any two-dimensional direction.
Then by the Morse–Schoenberg theorem \cite[Sect. 6.2]{GKM71}, for
any $\xi\in\mathcal{Z}_{x}$, the geodesic segment $\exp_{x}^{V}(s\xi)$,
$s\in[0,1]$, does not contain conjugate points. For this reason the
mapping $\exp_{x}^{V}(\cdot)$ restricted to the set $\mathcal{Z}_{x}$
is a local diffeomorphism and this implies that $\Xi:=\exp_{x}^{V}(\mathcal{Z}_{x})$
is an open subset of the set $\Omega$.

Let us show that at the same time $\Xi$ is a closed subset of $\Omega$.
Assume $\left\{ \xi_{k}\in\mathcal{Z}_{x}\right\} $ to be such a
sequence that the subsequence $\left\{ x_{k}=\exp_{x}^{V}(\xi_{k})\in\Xi\right\} $
converges to a point $x_{\ast}\in\Omega$. We must show the existence
of a vector $\xi_{\ast}\in\mathcal{Z}_{x}$ such that $\exp_{x}^{V}(\xi_{\ast})=x_{\ast}$.
The set $\bar{\mathcal{Z}}_{x}$ is compact, so we may consider that
$\xi_{k}$ converges to a point $\xi_{\ast}\in\bar{\mathcal{Z}}_{x}$.
It remains to show that $\xi_{\ast}\not\in\partial\mathcal{Z}_{x}$.
Suppose, conversely, that $\xi_{\ast}\in\partial\mathcal{Z}_{x}$.
Since for sufficiently small $\delta>0$ and for all $t\in[0,\delta]$
we have $\exp_{x}^{V}(t\xi_{\ast})\in\Omega$, then $t\xi_{\ast}\in\mathcal{Z}_{x}$
for all $t\in[0,\delta]$. From definition of $\mathcal{Z}_{x}$ it
follows that there exists $s_{\ast}\in(\delta,1)$ such that $\exp_{x}^{V}(s_{\ast}\xi_{\ast})\in\partial\Omega$,
but $\exp_{x}^{V}(s\xi_{\ast})\in\Omega$ when $s\in[0,s_{\ast})$.
Here we take into account that the equality $s_{\ast}=1$ is impossible
since $x_{\ast}\in\Omega$ (if the equality $s_{\ast}=1$ were valid,
the vector $\xi_{\ast}$ would belong to $\mathcal{Z}_{x}$ in contrary
to our assumption).

Consider now the sequence $s_{\ast}\xi_{k}$. It converges to $s_{\ast}\xi_{\ast}$.
Hence, $V\circ\exp_{x}^{V}(s_{\ast}\xi_{k})\to v$ and $V\circ\exp_{x}^{V}(s_{\ast}\xi_{k})<v$.
Put $\hat{V}(\xi):=\exp\circ V\circ\exp_{x}^{V}(\xi)$. Then for all
sufficiently large $k$ we have
\begin{gather}
\frac{1}{2}\left(\mathrm{e}^{v}+\mathrm{e}^{v_{0}}\right)<\hat{V}(s_{\ast}\xi_{k})<\mathrm{e}^{v}.\label{eq:conv_Vlev}
\end{gather}
 It is shown in \cite[p. 9]{ParRus12} that the inclusion $\bar{\Omega}\subset\mathcal{D}$
yields the existence of $\sigma>0$ such that
\begin{gather*}
\frac{\mathrm{d}^{2}}{\mathrm{d}s^{2}}\hat{V}(s\xi_{k})\ge\sigma\quad\forall s\in[0,1],\;\forall k\in\mathbb{N}.
\end{gather*}
Then each derivative $\frac{\mathrm{d}}{\mathrm{d}s}\hat{V}(s\xi_{k})$
is monotone increasing on segment $[0,1]$ and in addition
\begin{equation}
\frac{\mathrm{d}}{\mathrm{d}s}\hat{V}(s\xi_{k})\ge\frac{1}{2s_{\ast}}\left(\mathrm{e}^{v}-\mathrm{e}^{v_{0}}\right)\quad\forall s\in[s_{\ast},1]\label{eq:conv_dVds_gev-v0}
\end{equation}
 for all $k$ starting from sufficiently large number $k_{\ast}$.
In fact, in opposite case, there would exists sufficiently large $k$,
for which the inequalities ~\eqref{eq:conv_Vlev} hold true together
with
\[
\frac{\mathrm{d}}{\mathrm{d}s}\hat{V}(s\xi_{k})\le\frac{\mathrm{d}}{\mathrm{d}s}\Bigl|_{s=s_{\ast}}\hat{V}(s\xi_{k})<\frac{1}{2s_{\ast}}\left(\mathrm{e}^{v}-\mathrm{e}^{v_{0}}\right)\quad\forall s\in[0,s_{\ast}].
\]
But then
\begin{gather*}
\hat{V}(s_{\ast}\xi_{k})<\mathrm{e}^{v_{0}}+\frac{1}{2s_{\ast}}\left(\mathrm{e}^{v}-\mathrm{e}^{v_{0}}\right)s_{\ast}=\frac{1}{2}\left(\mathrm{e}^{v}+\mathrm{e}^{v_{0}}\right)
\end{gather*}
in contrary to inequality~\eqref{eq:conv_Vlev}. Now from inequality~\eqref{eq:conv_dVds_gev-v0}
we get
\begin{gather*}
\exp\circ V(x_{k})=\hat{V}(\xi_{k})\ge\hat{V}(s_{\ast}\xi_{k})+\frac{1}{2s_{\ast}}\left(\mathrm{e}^{v}-\mathrm{e}^{v_{0}}\right)(1-s_{\ast})\quad\forall k\ge k_{\ast}.
\end{gather*}
 Letting $k$ to infinity we arrive at
\begin{gather*}
\exp\circ V(x_{\ast})\ge\mathrm{e}^{v}+\frac{1}{2s_{\ast}}\left(\mathrm{e}^{v}-\mathrm{e}^{v_{0}}\right)(1-s_{\ast})>\mathrm{e}^{v}\quad\Rightarrow\quad V(x_{\ast})>v,
\end{gather*}
and this contradicts with assumption that $x_{\ast}\in\Omega$, i.e,
that $V(x_{\ast})<v$.

Hence, $\Xi$ is open--closed (in $\Omega$) subset of open set $\Omega$.
Since, by assumption, the set $\Omega$ is connected, then, as is
commonly known, $\Xi=\Omega$.

Thus for any pair of points $x,y\in\Omega$ there exists a tangent
vector $\zeta(x,y)\in T_{x}\mathcal{M}$ such that $\mathrm{exp}_{x}^{V}(\zeta(x,y))=y$.
Furthermore, the geodesic segment $\bigcup_{s\in[0,1]}\mathrm{exp}_{x}^{V}(s\zeta(x,y))$
connects points $x$ and $y$ and completely belongs to $\Omega$.
Moreover, it is shown in \cite[Proposition 3.8]{ParRus12} that the
mapping
\begin{gather*}
\mathrm{exp}_{x}^{V}(\cdot):\mathcal{Z}_{x}\mapsto\Omega
\end{gather*}
 is a diffeomorphism, and hence, the points $x,y$ uniquely determine
the vector $\zeta(x,y)$, as well as the geodesic segment of metric
$\left\langle \cdot,\cdot\right\rangle _{V}$ connecting the points
and lying in $\Omega$.

The further proving relies on propositions 3.9–3.11 from \cite{ParRus12}
and completely duplicates reasonings, contained in proof of Theorem
4.1 and in Addendum to the paper cited. On the basis of these reasonings
we can establish the following: 1) there exists a smooth mapping $\chi(\cdot,\cdot,\cdot):[0,1]\times\Omega\times\Omega\mapsto\Omega$
(connecting mapping) such that for fixed $x,y\in\Omega$ the function
$\chi(\cdot,x,y):[0,1]\mapsto\Omega$ is a solution to equation
\begin{gather}
\nabla_{x^{\prime}}x^{\prime}=\frac{\left\Vert x^{\prime}\right\Vert ^{2}}{2}\nabla V(x)\quad\left(x^{\prime}:=\frac{\mathrm{d}x}{\mathrm{d}s}\right)\text{,}\label{eq:quat_prop_chi}
\end{gather}
satisfies $\chi(0,x,y)=x$, $\chi(1,x,y)=y$, and one can point out
a positive number $\varkappa>0$ such that for any pair of smooth
functions $x_{i}(\cdot):\mathbb{R}\mapsto\Omega$, $i=1,2$, and for
any $\varphi\in\mathbb{T}^{k}$ the convexity condition of Lagrangian
density \eqref{eq:quat_Lagr_phi} is fulfilled
\begin{gather}
\begin{split} & \frac{\partial^{2}}{\partial s^{2}}L\left(\varphi+t\omega,\chi\left(s,x_{1}(t),x_{2}(t)\right),\frac{\partial}{\partial t}\chi\left(s,x_{1}(t),x_{2}(t)\right)\right)\ge\\
 & \phantom{{\}\frac{\partial}{\partial^{2}}L_{\varphi}(t\omega+\varphi)}}\ge\varkappa\left[\left\Vert \nabla_{\xi}\eta\right\Vert ^{2}+\left\Vert \xi\right\Vert ^{2}\left(\left\Vert \eta\right\Vert ^{2}+1\right)\right],
\end{split}
\label{eq:quat_convexL}
\end{gather}
where
\begin{gather*}
\eta:=\eta(s,t):=\frac{\partial}{\partial t}\chi\left(s,x_{1}(t),x_{2}(t)\right),\quad\xi:=\xi(s,t):=\frac{\partial}{\partial s}\chi\left(s,x_{1}(t),x_{2}(t)\right)
\end{gather*}
are vector fields along the mapping $(s,t)\mapsto\chi\left(s,x_{1}(t),x_{2}(t)\right)$;
2) as a consequence, for arbitrary $u_{1}(\cdot),u_{2}(\cdot)\in\mathcal{S}_{\Omega}$
the function
\begin{gather*}
s\mapsto J[\chi(s,u_{1}(\cdot),u_{2}(\cdot))]
\end{gather*}
is strictly convex downward on segment $[0,1]$; 3) a minimizing sequence
$\left\{ u_{j}(\cdot)\in\mathcal{S}_{\Omega}\right\} $
for $J\bigl|_{\mathcal{S}_{\Omega}}$ converges to a function $u(\cdot)\in\mathcal{H}_{\Omega}^{1}$
in the sense that
\begin{gather*}
\lim_{j\to\infty}\left\Vert \iota\circ u_{j}(\cdot)-\iota\circ u(\cdot)\right\Vert _{1}=0,
\end{gather*}
the equality $\inf\left\{ J[\mathcal{S}_{\Omega+\delta}]\right\} =J[u(\cdot)]$
being valid for sufficiently small $\delta>0$; 3) the function $u(\cdot)$
satisfies the equality $J^{\prime}[u_{\ast}(\cdot)]h(\cdot)=0$ (equivalent
to~\eqref{eq:quat_1varJ0}) for arbitrary vector field $h(\cdot)$
along $u(\cdot)$. This means that $t\mapsto u(t\omega)$ is a weak
quasiperiodic solution to system with Lagrangian density~\eqref{eq:quat_Lagr}. \end{proof}
\begin{rem}
\label{rem:quat_value_u}Since the inequalities satisfying by functions
$V(\cdot)$ and $W(\cdot,\cdot)$ in accordance with conditions of
Theorem~\ref{thm:quat_1} are strict, the result obtained is still
correct for a domain $\Omega^{\prime}=V^{-1}(-\infty,v^{\prime})\subset\Omega$
with arbitrary $v^{\prime}<v$ sufficiently close to $v$. And then
$u(\cdot)\in\mathcal{H}_{\Omega^{\prime}}^{1}$.
\end{rem}
\textbf{3. An existence theorem for classical quasiperiodic solution.}

The main result of this paper is the following theorem
\begin{thm}
\label{thm:quat_class_QPS}Let there hold the condition of Theorem~\ref{thm:quat_1}
and a function $u(\cdot)\in\mathcal{H}_{\Omega}^{1}$ defines a weak
quasiperiodic solution to system with Lagrangian density~\eqref{eq:quat_Lagr}
Then $u(\cdot)\in\mathrm{C}\!\left(\mathbb{T}^{k};{\Omega}\right)$
and the function $x(t):=u(t\omega)$ is a classical uniformly quasiperiodic
solution to this system.
\end{thm}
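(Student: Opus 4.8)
The plan is to upgrade the weak identity \eqref{eq:quat_1varJ0} to a genuine pointwise second-order equation along the flow and then bootstrap regularity. First I would transfer the problem to the ambient space $\mathbb{E}^n$. Writing $y:=\iota\circ u\in\mathrm{H}_{\omega}^{1}(\mathbb{T}^{k};\mathbb{E}^{n})$ and using \eqref{eq:quat_covder} together with the constraint $y(\varphi)\in\iota(\mathcal{M})$, whose normal component forces the normal part of $D_{\omega}^{2}y$ to coincide with the second fundamental form of $\iota$ evaluated on the velocity, the relation \eqref{eq:quat_1varJ0} is equivalent to the weak form of
\begin{gather*}
D_{\omega}^{2}y=\iota_{\ast}\nabla W(\varphi,u)+\mathrm{II}_{u}(D_{\omega}y,D_{\omega}y),
\end{gather*}
where $\mathrm{II}$ is smooth and bounded on the compact set $\iota(\bar{\Omega})$. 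Since $u$ takes values in the compact set $\bar{\Omega}$, the only term on the right-hand side not a priori controlled in $\mathrm{L}^{\infty}$ is the quadratic one $\mathrm{II}_{u}(D_{\omega}y,D_{\omega}y)$.

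The key a priori estimate, and the main obstacle, is a uniform velocity bound $\|D_{\omega}y\|_{\infty}<\infty$. Because no Diophantine hypothesis is imposed on $\omega$, small-divisor and Sobolev estimates for the operator $D_{\omega}$ on $\mathbb{T}^{k}$ are unavailable, so the bound cannot be read off from $y\in\mathrm{H}_{\omega}^{1}$ directly; it must come from the geometry. I would exploit the uniform convexity inequality \eqref{eq:quat_convexL} furnished by Theorem~\ref{thm:quat_1}: it controls $\|\nabla_{\xi}\eta\|^{2}+\|\xi\|^{2}(\|\eta\|^{2}+1)$ along the connecting mapping $\chi$, which both pins down the minimizer uniquely and yields $\sup_{j}\|D_{\omega}\iota\circ u_{j}\|_{\infty}<\infty$ for the minimizing sequence; passing to the limit gives $D_{\omega}y\in\mathrm{L}^{\infty}$. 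Equivalently, along almost every flow line the speed obeys $\frac{\mathrm{d}}{\mathrm{d}t}\tfrac{1}{2}\|\dot{x}\|^{2}=\langle\nabla W(t\omega,x),\dot{x}\rangle$, and recurrence of the quasiperiodic orbit inside the compact $\bar{\Omega}$ forbids unbounded growth.

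Granting $D_{\omega}y\in\mathrm{L}^{\infty}$, I would bootstrap along flow lines. For almost every base point $\varphi_{0}\in\mathbb{T}^{k}$ the restriction $t\mapsto y(\varphi_{0}+t\omega)$ is a weak solution of the second-order ordinary differential equation $\ddot{y}=F(\varphi_{0}+t\omega,y,\dot{y})$ with bounded $y,\dot{y}$ and smooth $F$. By the fundamental lemma of the calculus of variations (du~Bois-Reymond), $\dot{y}$ admits an absolutely continuous representative and the equation holds for almost every $t$; its right-hand side is then continuous in $t$, whence $\ddot{y}$ is continuous and $t\mapsto y(\varphi_{0}+t\omega)\in\mathrm{C}^{2}$. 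This is precisely the step that reproduces the approach of \cite{BerZha96,BloPen01}.

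Finally I would transfer this flow-line regularity to genuine continuity on $\mathbb{T}^{k}$. The uniform bounds on $D_{\omega}y$ and $D_{\omega}^{2}y$ make the associated Besicovitch function $x(t)=u(t\omega)$ and its first two derivatives uniformly (Bohr) almost periodic, rather than merely Besicovitch; equidistribution of the flow $\{\varphi_{0}+t\omega\}$ together with uniqueness of the minimizer then forces the almost-everywhere defined $u$ to possess a continuous representative $u\in\mathrm{C}(\mathbb{T}^{k};\mathcal{M})$ with continuous $D_{\omega}u$ and $D_{\omega}^{2}u$, so that by Proposition~\ref{prop:quat_0_1} the function $x(t)=u(t\omega)$ is a classical quasiperiodic solution. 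That $u$ in fact maps into the open set $\Omega$ and not merely $\bar{\Omega}$ follows from the strict boundary inequality $\langle\nabla W(\varphi,x),\nabla V(x)\rangle>0$ on $\partial\Omega$, which acts as an inward barrier preventing the orbit from touching $\partial\Omega$, exactly in the spirit of Remark~\ref{rem:quat_value_u}. I expect the velocity bound and the ensuing passage from almost-every-flow-line regularity to continuity on the whole torus to be the crux, since the flow lines are null sets and the functional controls $u$ only in the single direction $\omega$.
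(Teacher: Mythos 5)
Your skeleton matches the paper's: regularity along almost every flow line via the Berger--Zhang change of variables and du Bois-Reymond, then a uniform velocity bound, then an upgrade from Besicovitch to uniform almost periodicity. But the two steps you yourself flag as the crux are exactly where the argument fails to close. First, the velocity bound. Neither of your derivations works: the convexity inequality \eqref{eq:quat_convexL} is integrated along the connecting homotopy and yields strict convexity of $J$, hence uniqueness of the minimizer and $\mathrm{H}_{\omega}^{1}$-convergence of minimizing sequences, but it gives no $\mathrm{L}^{\infty}$ control whatever of $D_{\omega}\iota\circ u_{j}$ --- a minimizing sequence can converge in $\mathrm{H}_{\omega}^{1}$ while $\sup_{\varphi}\left\Vert D_{\omega}\iota\circ u_{j}(\varphi)\right\Vert \to\infty$. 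And the energy identity $\frac{\mathrm{d}}{\mathrm{d}t}\frac{1}{2}\left\Vert \dot{x}\right\Vert ^{2}=\left\langle \nabla W,\dot{x}\right\rangle$ only gives $\left\Vert \dot{x}(t)\right\Vert =O(\left|t\right|)$; recurrence of the position in the compact set $\bar{\Omega}$ does not constrain the speed, and the Besicovitch mean-square bound on $\dot{x}$ does not exclude spikes whose contribution to the mean vanishes as $T\to\infty$. The paper explicitly notes that in the Euclidean case Landau's inequality would settle this, and that on a manifold a different mechanism is required: it differentiates the Euler--Lagrange equation, observes that $\xi=\dot{x}$ solves a linear second-order system which may be modified by the curvature term $-r(\left\Vert \dot{x}\right\Vert )R(\dot{x},\xi)\dot{x}$ at no cost since $R(\xi,\xi)=0$, shows via a Lyapunov quadratic form built from $V$ and conditions \textbf{C1}--\textbf{C2} that the modified homogeneous system is exponentially dichotomous, and concludes that the parallel-transported velocity $y_{\ast}(t)=\Xi_{t}^{0}\dot{x}(t)$, which a priori grows at most linearly, must coincide with the unique bounded solution of the inhomogeneous system (Proposition~\ref{prop:quat_bound_deriv}). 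This entire argument is absent from your proposal, and nothing you propose replaces it.

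Second, the upgrade to uniform quasiperiodicity. A Besicovitch almost periodic function with bounded derivatives along the flow line is not thereby Bohr almost periodic, so the claim that the uniform bounds ``make'' $x(t)$ uniformly almost periodic is unjustified. The paper invokes Amerio's theorem, whose hypothesis is that every equation in the $H$-class of $L(\varphi_{0}+t\omega,x,\dot{x})$ has at most one solution staying in $\mathcal{K}$ with bounded velocity; this is supplied by Proposition~\ref{prop:quat_uniqBS}, proved with the connecting map $\chi$: the function $l(t)=\left\langle \dot{x}_{2}(t),\xi(1,t)\right\rangle -\left\langle \dot{x}_{1}(t),\xi(0,t)\right\rangle$ is bounded and, by \eqref{eq:quat_convexL}, nondecreasing with a definite rate, which forces two bounded solutions to coincide. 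Uniqueness of the minimizer of $J$ on $\mathcal{H}_{\Omega}^{1}$, which is what your convexity argument delivers, is a different statement (one fixed variational problem versus every equation in the hull) and does not by itself feed Amerio's theorem. Until you supply the dichotomy-based velocity bound and the hull-uniqueness statement, the passage from almost-every-flow-line regularity to $u(\cdot)\in\mathrm{C}\!\left(\mathbb{T}^{k};\Omega\right)$ remains open.
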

The proof of the theorem relies on propositions~\ref{prop:quat_classical}--\ref{prop:quat_uniqBS}
given below.

At first we exploit a technique proposed in \cite{BerZha96} to prove
the following proposition.
\begin{prop}
\label{prop:quat_classical} Let there hold the conditions of Theorem~\ref{thm:quat_1}
and the function $u(\cdot)\in\mathcal{H}_{\Omega}^{1}$ defines a
weak quasiperiodic solution of the system with Lagrangian density~\eqref{eq:quat_Lagr}.
Then the function $t\mapsto u(\varphi+t\omega)$ is a classical Besicovitch
quasiperiodic solution to system with Lagrangian density
\begin{gather}
L(\varphi+t\omega,x,\dot{x}):=\mathrm{K}(\dot{x})+W(\varphi+t\omega,x)\label{eq:quat_Lagr_phi}
\end{gather}
 for almost all $\varphi\in\mathbb{T}^{k}$. This solution takes values
in a compact set $\mathcal{K}\subset\Omega$.\end{prop}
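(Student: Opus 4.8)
The plan is to combine an a priori confinement of $u$ to a compact subset of $\Omega$ with an ODE-type regularity bootstrap carried out fibrewise along the $\omega$-lines, in the spirit of \cite{BerZha96}. First I would record that, by Remark~\ref{rem:quat_value_u}, the minimizer already lies in $\mathcal{H}_{\Omega'}^{1}$ for some $\Omega'=V^{-1}((-\infty,v'))$ with $v'<v$; consequently the associated Besicovitch function may be taken with values in the compact set $\mathcal{K}:=\overline{\Omega'}=V^{-1}((-\infty,v'])\subset\Omega$, using the Marcinkiewicz-completeness fact recalled before Definition~\ref{def:weakQPS} that guarantees a representative valued in $\overline{\mathcal{A}}$. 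On $\mathcal{K}$ all geometric data of the embedding $\iota$ — the projector $P_{x}$, the second fundamental form, the curvature — are smooth and uniformly bounded, which is exactly what makes the bootstrap close.

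Next I would recast the weak equation \eqref{eq:quat_1varJ0} as a genuine second-order equation in $\mathbb{E}^{n}$. Differentiating the constraint $\iota\circ u(\varphi)\in\iota(\mathcal{M})$ twice in the direction $\omega$ shows that the normal component $(\mathrm{Id}-P_{u})D_{\omega}^{2}\iota\circ u$ is a smooth, uniformly bounded quadratic function $\Phi(u,D_{\omega}\iota\circ u)$ of the velocity, while the tangential component is prescribed by \eqref{eq:quad_mainproperty}. Hence $\iota\circ u$ satisfies, in the weak sense on $\mathbb{T}^{k}$,
\[
D_{\omega}^{2}\iota\circ u=\iota_{\ast}\nabla W(\cdot,u)+\Phi\bigl(u,D_{\omega}\iota\circ u\bigr),
\]
where the first summand is bounded (since $u$ is valued in $\mathcal{K}$) while the second is $O(\|D_{\omega}\iota\circ u\|^{2})$ and hence only integrable a priori.

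The core step is a fibrewise regularity argument. By Fubini together with the Riesz--Fischer--Besicovitch correspondence recalled above, for almost every $\varphi\in\mathbb{T}^{k}$ the slice $x_{\varphi}(t):=u(\varphi+t\omega)$ has $y(t):=\iota\circ x_{\varphi}(t)\in\mathrm{H}_{\mathrm{loc}}^{1}(\mathbb{R};\mathbb{E}^{n})$, takes values in $\iota(\mathcal{K})$, and satisfies the distributional ODE $\ddot{y}=\iota_{\ast}\nabla W(\varphi+t\omega,x_{\varphi})+\Phi(x_{\varphi},\dot{y})$. I would then bootstrap in the single variable $t$. Since $\mathrm{H}_{\mathrm{loc}}^{1}(\mathbb{R})\hookrightarrow C^{0}$, the map $y$ is continuous and $\dot{y}\in L_{\mathrm{loc}}^{2}$; the right-hand side is then in $L_{\mathrm{loc}}^{1}$ (bounded first summand, integrable quadratic summand), so $\dot{y}\in W_{\mathrm{loc}}^{1,1}$. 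The one-dimensional embedding $W_{\mathrm{loc}}^{1,1}(\mathbb{R})\hookrightarrow C^{0}$ upgrades $\dot{y}$ to a continuous, locally bounded function; the right-hand side then becomes continuous in $t$, whence $\ddot{y}\in C^{0}$ and $y\in C^{2}$. Applying $P_{u}$ and using \eqref{eq:quat_covder} recovers the intrinsic equation $\nabla_{\dot{x}}\dot{x}=\nabla W(\varphi+t\omega,x)$ pointwise, so $t\mapsto u(\varphi+t\omega)$ is a classical solution; its membership in the Besicovitch class and the fact that it is valued in $\mathcal{K}$ are inherited from the global $\mathrm{H}_{\omega}^{1}$-regularity and the confinement of the first step.

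The main obstacle I anticipate is the quadratic dependence of $\Phi$ on the velocity, which a priori places the right-hand side only in $L^{1}$ and thereby blocks a naive $\mathrm{H}^{2}$-estimate; the point that rescues the argument is that the equation is genuinely one-dimensional along each $\omega$-line, where $W^{1,1}\hookrightarrow C^{0}$ immediately promotes an $L^{1}$ second derivative to a continuous first derivative and closes the loop. A secondary delicate point is the slicing itself: a function of class $\mathrm{H}_{\omega}^{1}$ restricts to a bona fide Besicovitch function only for almost every $\varphi$, and its representative on a single line is non-unique, as stressed before Definition~\ref{def:weakQPS}; I would therefore carry out the reduction to the line ODE on a full-measure set of $\varphi$ and select there the representative valued in $\iota(\mathcal{K})$.
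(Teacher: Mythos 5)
Your overall strategy coincides with the paper's: both follow the Berger--Zhang device of slicing the weak equation along the lines $\{\varphi+t\omega\}$ via Fubini and then running a one-dimensional regularity bootstrap in which an $L^{1}_{\mathrm{loc}}$ second derivative promotes the first derivative to a continuous function, after which the right-hand side becomes continuous and the solution is $\mathrm{C}^{2}$. The genuine difference is where the second-order equation lives: you work in the ambient space $\mathbb{E}^{n}$ and split $D_{\omega}^{2}\iota\circ u$ into a tangential part (from the weak equation) and a normal part (the second fundamental form), whereas the paper first observes that $\Omega$ is globally diffeomorphic to a domain $\mathcal{U}\subset\mathbb{E}^{m}$ (a byproduct of the proof of Theorem~\ref{thm:quat_1}, where $\exp_{x}^{V}$ is shown to be a diffeomorphism of $\mathcal{Z}_{x}$ onto $\Omega$) and writes the weak equation in that single chart in divergence form for the momentum $p=g(u)D_{\omega}u$, with unconstrained $\mathbb{E}^{m}$-valued test functions. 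The chart formulation buys something concrete: everything stays at the level of first derivatives paired against test functions, so one reads off directly that $p$ has an integrable weak derivative $p_{\tau}^{\prime}=G_{v}(\dot{v},\dot{v})+\|\omega\|^{-2}W_{x}^{\prime}$ and absolute continuity of $p$ follows from the Nikolsky-type theorem on a.e.\ line.

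Two steps of your version need repair or expansion. First, the claim that ``differentiating the constraint twice'' identifies $(\mathrm{Id}-P_{u})D_{\omega}^{2}\iota\circ u$ with a quadratic expression in the velocity is not justified at $\mathrm{H}_{\omega}^{1}$ regularity: $D_{\omega}^{2}\iota\circ u$ is a priori only a distribution, and its product with the merely $\mathrm{H}_{\omega}^{1}\cap\mathrm{L}^{\infty}$ normal frame $N_{\alpha}(u)$ (equivalently, with $\mathrm{Id}-P_{u}$) is undefined without a preliminary convention; you must either make the product rule precise by defining $(D_{\omega}^{2}\iota\circ u,N_{\alpha}(u))$ as $D_{\omega}(D_{\omega}\iota\circ u,N_{\alpha}(u))-(D_{\omega}\iota\circ u,DN_{\alpha}(u)D_{\omega}u)$ and checking consistency of the resulting tangential/normal splitting, or simply work in the chart as the paper does. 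Second, the passage from regularity of the slice on a bounded segment to regularity on all of $\mathbb{R}$ for almost every line is exactly where the paper invests the most effort (the translation argument showing $\mathrm{mes}\,Y_{\ast}=0$ over the family of cubes $K+2\pi\mathbf{m}$, plus the shift of the cube by $s\varepsilon_{k}$ to handle the junctions between consecutive segments); you acknowledge the a.e.\ nature of the slicing but do not actually produce the full-measure set on which $\nu(\cdot,y)$ is locally integrable on the whole axis. Your route can close this more cleanly than the paper does --- apply Fubini on the torus to $\int_{\mathbb{T}^{k}}\int_{-T}^{T}\nu(\varphi+t\omega)\,\mathrm{d}t\,\mathrm{d}\varphi$ for a countable sequence $T\to\infty$ and intersect the resulting full-measure sets --- but this must be said, since it is the step that turns ``classical solution on a fundamental segment'' into the assertion of the proposition.
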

\begin{proof}
Relying on reasonings from the proof of Theorem~\ref{thm:quat_1}
we can conclude that the set $\Omega$ is diffeomorphic to a domain
$\mathcal{U}$ of Euclidean space $\mathbb{E}^{m}=\left(\mathbb{R}^{m},\left(\cdot,\cdot\right)\right)$.
We may think of this domain as a map of the set $\Omega$. Thus, taking
into account Proposition~\ref{rem:quat_value_u}, we shall consider
that the function $u(\cdot)$, existence of which is established Theorem~\ref{thm:quat_1},
belongs to class $\mathrm{H}_{\omega}^{1}\left(\mathbb{T}^{k};\mathbb{E}^{m}\right)$
and takes values in a compact subset $\mathcal{C}\subset\mathcal{U}$. We shall also consider that the embedding $\iota(\cdot)$ acts from $\mathcal{U}$ into $\mathbb{E}^{n}$.
The metric \textbf{$\left\langle \cdot,\cdot\right\rangle $ }induce
in $\mathcal{U}$ a tensor field $g(x)$ and the corresponding metric
$\left(g(x)\cdot,\cdot\right)=\left\langle \cdot,\cdot\right\rangle $.
In view of~\eqref{eq:quat_covder}, for any smooth vector field of
tangent vectors $\xi(t)$ along smooth curve $x(t)$ there hold equalities
\begin{gather*}
\left(\frac{\mathrm{d}}{\mathrm{d}t}\iota x(t),\frac{\mathrm{d}}{\mathrm{d}t}\iota_{\ast}\xi(t)\right)=\left(\iota_{\ast}\dot{x}(t),\frac{\mathrm{d}}{\mathrm{d}t}\iota_{\ast}\xi(t)\right)=\left(\iota_{\ast}\dot{x}(t),P_{x(t)}\frac{\mathrm{d}}{\mathrm{d}t}\iota_{\ast}\xi(t)\right)=\left(\iota_{\ast}\dot{x}(t),\iota_{\ast}\nabla_{\dot{x}(t)}\xi(t)\right),
\end{gather*}
 and in  coordinates of domain $\mathcal{U}$ we have
\begin{gather*}
\nabla_{\dot{x}(t)}\xi(t)=\dot{\xi}(t)+\Gamma_{x(t)}\left(\dot{x}(t),\xi(t)\right),
\end{gather*}
where a bilinear mapping $\Gamma_{x}:\mathbb{E}^{m}\times\mathbb{E}^{m}\mapsto\mathbb{E}^{m}$
smoothly depends on $x\in\mathcal{U}$ and can be expressed via
the Christoffel symbols. Therefore a sequence $\left\{ u_{j}(\cdot)\right\} $
which defines $u(\cdot)$ and a sequence $\left\{ h_{j}(\cdot)\right\} $
which according to Definition~\ref{def:quad_v_f} defines a vector
field $h(\cdot)$ along $u(\cdot)$ satisfy the equalities
\begin{gather*}
\left(D_{\omega}\iota\circ u_{j}(\varphi),D_{\omega}\iota_{\ast}h_{j}(\varphi)\right)=\left(\iota_{\ast}D_{\omega}u_{j}(\varphi),\iota_{\ast}\nabla_{D_{\omega}u_{j}(\varphi)}h_{j}(\varphi)\right)=\\
=\left(g\left(u_{j}(\varphi)\right)D_{\omega}u_{j}(\varphi),\nabla_{D_{\omega}u_{j}(\varphi)}h_{j}(\varphi)\right)=\\
=\left(g\left(u_{j}(\varphi)\right)D_{\omega}u_{j}(\varphi),D_{\omega}h_{j}(\varphi)+\Gamma_{u_{j}(\varphi)}\left(D_{\omega}u_{j}(\varphi),h_{j}(\varphi)\right)\right).
\end{gather*}
 Consider a special case where the sequence $\left\{ h_{j}(\cdot)\right\} $
is defined by a unique smooth mapping $h(\cdot):\mathbb{T}^{k}\mapsto\mathbb{E}^{m}$,
i.e., the vector $h_{j}(\varphi)$ has an origin at point $u_{j}(\varphi)$
and an end at point $u_{j}(\varphi)+h(\varphi)$. Taking into account
that $\left\{ u_{j}(\cdot)\right\} $ is uniformly bounded and converges
in $\mathrm{H}_{\omega}^{1}\!\left(\mathbb{T}^{k};\mathbb{E}^{m}\right)$
and also that $\nabla W(\varphi,x)=g^{-1}(x)W_{x}^{\prime}(\varphi,x)$,
one can present the property~\eqref{eq:quad_mainproperty} of function
$u(\cdot)$ in the form
\begin{gather}
\intop_{\mathbb{T}^{k}}\left[\left(g(u(\varphi))D_{\omega}u(\varphi),D_{\omega}h(\varphi)+\Gamma_{u(\varphi)}\left(D_{\omega}u(\varphi),h(\varphi)\right)\right)+\left(W_{x}^{\prime}(\varphi,u(\varphi)),h(\varphi)\right)\right]\mathrm{d}\varphi=0.\label{eq:quat_1stvareq0}
\end{gather}

Just as in \cite{BerZha96}, in the integral from the left-hand side,
we perform a change of variables $\varphi\to(\tau,y)$ by formula
\begin{gather*}
\varphi=Q(\tau,y):=\sum_{i=1}^{k-1}y_{i}\varepsilon_{i}+\tau\varepsilon_{k},
\end{gather*}
 where  $\left\{ \varepsilon_{i}\right\} _{i=1}^{k}$
is an orthonormal basis in $\mathbb{E}^{k}$ with $\varepsilon_{k}:=\omega/\left\Vert \omega\right\Vert $, and $y=(y_{1},\ldots,y_{k-1})$,
Setting
\begin{gather*}
v(\tau,y):=u(Q(\tau,y)),\quad w(\tau,y):=h(Q(\tau,y)),\quad I(y)\text{:}=\left\{ \tau\in\mathbb{R}:Q(\tau,y)\in K\right\} ,
\end{gather*}
and applying the Fubini theorem to the left-hand side of the equality
\begin{gather}
\intop_{K}\left[\left(g(u(\varphi))D_{\omega}u(\varphi),D_{\omega}h(\varphi)+\Gamma_{u(\varphi)}\left(D_{\omega}u(\varphi),h(\varphi)\right)\right)+\left(W_{x}^{\prime}(\varphi,u(\varphi)),h(\varphi)\right)\right]\mathrm{d}\varphi=0,\label{eq:quat_1stvar_K}
\end{gather}
 we get
\begin{gather}
\intop_{Y}\intop_{I(y)}\left[\left(\left\Vert \omega\right\Vert ^{2}g(v(\tau,y))v_{\tau}^{\prime}(\tau,u),\dot{w}(\tau,y)+\Gamma_{v(\tau,y)}\left(v_{\tau}^{\prime}(\tau,y),w(\tau,y)\right)\right)\right]\mathrm{d}\tau\mathrm{d}y+\nonumber \\
+\intop_{Y}\intop_{I(y)}\left(W_{x}^{\prime}(Q(\tau,y),v(\tau,y)),w(\tau,y)\right)\mathrm{d}\tau\mathrm{d}y=0,\label{eq:quat_gvwG}
\end{gather}
 where $v_{\tau}^{\prime}(\tau,u)$ stands for Sobolev generalized
partial derivative by variable $\tau$, and $\dot{w}(\tau,y)=\partial w(\tau,y)/\partial\tau$
is classical partial derivative by this variable. Since $u(\varphi)\in\mathrm{H}_{\omega}^{1}\left(\mathbb{T}^{k};\mathbb{E}^{m}\right)$,
then
\begin{gather*}
\intop_{Y}\intop_{I(y)}\left\Vert v(\tau,y)\right\Vert ^{2}+\left\Vert v_{\tau}^{\prime}(\tau,y)\right\Vert ^{2}\mathrm{d}\tau\mathrm{d}y\text{<}\infty.
\end{gather*}
If we now denote by $Y$ the orthogonal projection of $k$-dimensional
cube $K:=[0,2\pi]^{k}$ onto the hyperplane with basis$\left\{ \varepsilon_{i}\right\} _{i=1}^{k-1}$,
then the Fubini theorem together with Theorem~6 \cite[p. 399]{Nik83}
implies the following assertions: there exists a set $Y^{\prime}\subset Y$
such that $\mathrm{mes}\, Y^{\prime}=\mathrm{mes}\, Y$ and for any
$y\in Y^{\prime}$ the function $v(\cdot,y):I(y)\mapsto\mathcal{C}$
is absolutely continuous on $I(y)$, has  classical derivative $\dot{v}(\tau,y)=v_{\tau}^{\prime}(\tau,y)$
a.e. on this segment, satisfies the condition
\begin{gather*}
\intop_{I(y)}\left\Vert v(\tau,y)\right\Vert ^{2}+\left\Vert v_{\tau}^{\prime}(\tau,y)\right\Vert ^{2}\mathrm{d}\tau<\infty\quad\forall y\in Y^{\prime},
\end{gather*}
and, as a consequence, belongs to Sobolev space $\mathrm{H}^{1}\!\left(I(y);\mathbb{E}^{m}\right)$.

By introducing a ``momentum''
\begin{gather*}
p(\tau,y)=g(v(\tau,y))\dot{v}(\tau,u),
\end{gather*}
we represent the equality~\eqref{eq:quat_gvwG} in the form
\begin{gather}
\intop_{Y}\intop_{I(y)}\left(p(\tau,y),\dot{w}(\tau,y)\right)\mathrm{d}\tau\mathrm{d}y=\nonumber \\
=-\intop_{Y}\intop_{I(y)}\left(G_{v(\tau,y)}(\dot{v}(\tau,y),\dot{v}(\tau,y))+\left\Vert \omega\right\Vert ^{-2}W_{x}^{\prime}(Q(\tau,y),v(\tau,y)),w(\tau,y)\right)\mathrm{d}\tau\mathrm{d}y,\label{eq:quat_gvwqG}
\end{gather}
where a family of bilinear mappings $G_{x}(\cdot,\cdot):\mathbb{E}^{m}\times\mathbb{E}^{m}\mapsto\mathbb{E}^{m}$
smoothly dependent on $x\in\mathcal{U}$ is defined by
\begin{gather*}
\left(g(x)a,\Gamma_{x}(b,c)\right)=\left(G_{x}(a,b),c\right)\quad\forall a,b,c\in\mathbb{E}^{m}.
\end{gather*}
 Since the equality~\eqref{eq:quat_gvwqG} is fulfilled for any function
$w(\tau,y)$ from space $\mathrm{C}_{0}^{\infty}\!\left(Q^{-1}\left(K\right);\mathbb{E}^{m}\right)$
formed by smooth functions having their supports in interior of the
set $Q^{-1}\left(K\right)=\bigcup_{y\in Y}I(y)$, then the above equality
implies that the function $p(\tau,y)$ has integrable in $Q^{-1}(K)$
Sobolev generalized derivative
\begin{gather}
p_{\tau}^{\prime}(\tau,y)=G_{v(\tau,y)}(\dot{v}(\tau,y),\dot{v}(\tau,y))+\left\Vert \omega\right\Vert ^{-2}W_{x}^{\prime}(Q(\tau,y),v(\tau,y)).\label{eq:quat_gvveqGW}
\end{gather}
By using Theorem 6 \cite[p. 399]{Nik83} again, we arrive at conclusion
that there exists a set $Y^{\prime\prime}\subset Y^{\prime}$ such
that $\mathrm{mes}\, Y^{\prime\prime}=\mathrm{mes}\, Y^{\prime}$
and for any $y\in Y^{\prime\prime}$ there holds the equality
\begin{gather}
p(\tau,y)=p(\tau_{0},y)+\intop_{\tau_{0}}^{\tau}\left[G_{v(s,y)}(\dot{v}(s,y),\dot{v}(s,y))+\left\Vert \omega\right\Vert ^{-2}W_{x}^{\prime}(Q(s,y),v(s,y))\right]\mathrm{d}s,\label{eq:quat_ptauy}
\end{gather}
on segment $I(y)$ where $\tau_{0}\in I(y)$ is a fixed point. Hence,
for any $y\in Y^{\prime\prime}$, the functions $\tau\mapsto p(\tau,y)$
and $\tau\mapsto\dot{v}(\tau,y)=p(\tau,y)/g(v(\tau,y))$ are absolutely
continuous on the segment $I(y)$, and each of these functions has
an ordinary partial derivative by $\tau$ a. e. on $I(y)$ . Then
the function $\tau\mapsto v(\tau,y)$ is continuously differentiable,
and from~\eqref{eq:quat_ptauy} it follows that the function $\tau\mapsto p(\tau,y)$
is continuously differentiable. Then it becomes clear that $\tau\mapsto\dot{v}(\tau,y)$
is continuously differentiable as well. Thus, for a.e. $y\in Y$ the
function $\tau\mapsto v(\tau,y)$ is twice continuously differentiable
and satisfies the equality \eqref{eq:quat_gvveqGW} everywhere on
$I(y)$. Having determined from this equality the function $\ddot{v}(\cdot,\cdot)$
of variables $\tau,y$, we see that it is integrable on $Q^{-1}(K)$.

Finally, let us show that for almost all $y\in Y$ the functions $\tau\mapsto v(\tau,y)$
and $\tau\mapsto p(\tau,y)$ have the properties described above not
only on $I(y)$, but also on any segment of real axis $\mathbb{R}$.
Represent the space $\mathbb{E}^{k}$ in the form of union of cubes:
$\mathbb{E}^{k}=\bigcup_{\mathbf{m}\in\mathbb{Z}^{k}}\left(K+2\pi\mathbf{m}\right)$.
Then $\mathbb{R}=\bigcup_{\mathbf{m}\in\mathbb{Z}}I_{\mathbf{m}}(y)$
where
\begin{gather*}
I_{0}(y):=I(y),\quad I_{\mathbf{m}}(y):=\left\{ \tau\in\mathbb{R}:Q(\tau,y)\in K+2\pi\mathbf{m}\right\} .
\end{gather*}
Among all $I_{\mathbf{m}}(y)$, there are sets of zero measure and,
in particular, empty sets. Put
\begin{gather*}
Y_{\mathbf{m}}:=\left\{ y\in Y:\mathrm{mes}\, I_{\mathbf{m}}(y)>0\right\} ,\quad\mathbb{M}:=\left\{ \mathbf{m}\in\mathbb{Z}^{k}:Y_{\mathbf{m}}\ne\varnothing\right\} .
\end{gather*}
It is clear that for a fixed $\mathbf{\mathbf{m}}$ the set $Y_{\mathbf{m}}$
consists only of those $y\in Y$ for which the line $L(y)$ defined
in $\mathbb{E}^{k}$ by equation $\varphi=Q(\tau,y),\;\tau\in\mathbb{R}$,
intersects the cube $K+2\pi\mathbf{m}$ by segment which is not reduced
to single-point set, and $\mathbb{M}$ contains only those integer
vectors $\mathbf{m}$, for which the corresponding cube $K+2\pi\mathbf{m}$
intersects at least with one line $L(y)$, where $y\in Y$, and the
intersection is a segment or a point. It is easily seen that rational
independence of components of frequency vector $\omega$ implies that
any nonempty set $Y_{\mathbf{m}}$ is open. Besides, $\mathrm{mes}(Y\setminus Y_{\mathbf{0}})=0$.

Having introduced the orthoprojectors
\begin{gather*}
\mathrm{pr}_{Y}(a):=\sum_{i=1}^{k-1}(a,\varepsilon_{i})\varepsilon_{i},\quad\mathrm{pr}_{T}(a):=(a,\varepsilon_{k})\varepsilon_{k}\quad\forall a\in\mathbb{E}^{k},
\end{gather*}
we get the following equalities %
\footnote{We identify a function on torus $\mathbb{T}^{k}=\mathbb{R}^{k}/2\pi\mathbb{Z}^{k}$
with its natural lift into the space $\mathbb{E}^{k}$ covering this
torus.%
}

\begin{gather*}
f(\tau,y):=F(Q(\tau,y))=F(Q(\tau,y)-2\pi\mathbf{m})=F\left(Q(\tau-\mathrm{pr}_{T}(2\pi\mathbf{m}),y-\mathrm{pr}_{Y}(2\pi\mathbf{m}))\right)=\\
=f\left(\tau-\mathrm{pr}_{T}(2\pi\mathbf{m}),y-\mathrm{pr}_{Y}(2\pi\mathbf{m})\right)
\end{gather*}
for arbitrary function $F:\mathbb{T}^{k}\mapsto\mathbb{E}^{m}$, and
then for any $\mathbf{m}\in\mathbb{M}$ and any $y\in Y_{\mathbf{m}}$
we have
\begin{gather}
y-\mathrm{pr}_{Y}(2\pi\mathbf{m})\in Y_{\mathbf{0}},\quad I_{\mathbf{\mathbf{m}}}(y)-\mathrm{pr}_{T}(2\pi\mathbf{m})=I\left(y-\mathrm{pr}_{Y}(2\pi\mathbf{m})\right),\label{eq:quat_Y-pr}
\end{gather}
Thus, for any $y\in Y_{\mathbf{m}}$, properties of function $\tau\mapsto f(\tau,y)$
on segment $I_{\mathbf{m}}(y)$ are completely defined by properties
of function $\tau\mapsto f\left(\tau,y-\mathrm{pr}_{Y}(2\pi\mathbf{m})\right)$
on segment $I\left(y-\mathrm{pr}_{Y}(2\pi\mathbf{m})\right)$ of line
$L\left(y-\mathrm{pr}_{Y}(2\pi\mathbf{m})\right)$ contained in cube
$K$.

Denote by $Y_{\ast}$ the set of such $y\in Y$ for which the function
\[
\tau\mapsto\nu(\tau,y):=\left\Vert v(\tau,y)\right\Vert ^{2}+\left\Vert v_{\tau}^{\prime}(\tau,y)\right\Vert ^{2}
\]
is not locally integrable on $\mathbb{R}$, or, what is the same thing,
for any $y\in Y_{\ast}$ there exists at least one $\mathbf{m}\in\mathbb{M}$
such that $y\in Y_{\mathbf{m}}$ and
\begin{gather}
\intop_{I_{\mathbf{m}}(y)}\nu(\tau,y)\mathrm{d}\tau=\intop_{I\left(y-\mathrm{pr}_{Y}(2\pi\mathbf{m})\right)}\nu\left(\tau,y-\mathrm{pr}_{Y}(2\pi\mathbf{m})\right)\mathrm{d}\tau=\infty.\label{eq:quat_infty}
\end{gather}
Then $Y_{\ast}\subset\bigcup_{\mathbf{m}\in\mathbb{M}}Y_{\mathbf{m}}$
and therefore $Y_{\ast}=\bigcup_{\mathbf{m}\in\mathbb{M}}(Y_{\ast}\cap Y_{\mathbf{m}})$.
Obviously,
\begin{gather*}
\mathrm{mes}(Y_{\ast}\cap Y_{\mathbf{m}})=\mathrm{mes}\left(\left[Y_{\ast}\cap Y_{\mathbf{m}}\right]-\mathrm{pr}_{Y}(2\pi\mathbf{m})\right)
\end{gather*}
and in view of~\eqref{eq:quat_Y-pr}, \eqref{eq:quat_infty} we have
$\left[Y_{\ast}\cap Y_{\mathbf{m}}\right]-\mathrm{pr}_{Y}(2\pi\mathbf{m})\subset Y_{\ast}\cap Y_{\mathbf{0}}$.
Since for almost all $y\in Y$ the function $\tau\mapsto\nu(\tau,y)$
is integrable on $I_{\mathbf{0}}(y)=I(y)$, then $\mathrm{mes}\left(Y_{\ast}\cap Y_{\mathbf{0}}\right)=0$,
and therefore
\begin{gather*}
\mathrm{mes}\left(Y_{\ast}\right)\le\sum_{\mathbf{m}\in\mathbb{M}}\mathrm{mes}(Y_{\ast}\cap Y_{\mathbf{m}})=\sum_{\mathbf{m}\in\mathbb{M}}\mathrm{mes}\left(\left[Y_{\ast}\cap Y_{\mathbf{m}}\right]-\mathrm{pr}_{Y}(2\pi\mathbf{m})\right)\le\\
\le\sum_{\mathbf{m}\in\mathbb{M}}\mathrm{mes}\left(Y_{\ast}\cap Y_{\mathbf{0}}\right)=0.
\end{gather*}
Thus, for almost all $y\in Y$ the function $\tau\mapsto\nu(\tau,y)$
is locally integrable on $\mathbb{R}$. From this it follows that
for almost all $y\in Y$ the function $\tau\mapsto v(\tau,y)$ is
absolutely continuous and has integrable derivative on any segment
$I_{\mathbf{m}}(y)$, $\mathbf{m}\in\mathbb{M}$.

Arguing in the same way as above, we can easily prove that for almost
all $y\in Y$ the function $\tau\mapsto v(\tau,y)$ is twice continuously
differentiable on each $I_{\mathbf{m}}(y)$, $\mathbf{m}\in\mathbb{M}$.
Now let us take into account that in view of~\eqref{eq:quat_1stvareq0}
integration over cube $K$ in the right-hand side of equality \eqref{eq:quat_1stvar_K}
can be replaced by integration over cube $K+s\varepsilon_{k}$ with
arbitrary $s\in\mathbb{R}$. According to equality~\eqref{eq:quat_gvwG}
the segment $I(y)$ can be replaced be $I(y)+s$. Then it becomes
clear that the function $\tau\mapsto v(\tau,y)$ is twice continuously
differentiable on each segment $I_{\mathbf{m}}(y)+s$, $\mathbf{m}\in\mathbb{M}$,
and hence, on all real axis. In addition the equality~\eqref{eq:quat_gvveqGW}
holds true for all real $\tau$. By substituting $\tau=y_{k}+\left\Vert \omega\right\Vert t$
in~\eqref{eq:quat_gvveqGW}, we arrive at conclusion that for any
$y_{k}\in\mathbb{R}$ and almost all $y\in Y$ the function $t\mapsto u\left(\sum_{i=1}^{k}y_{k}\varepsilon_{k}+t\omega\right)$,
$t\in\mathbb{R}$, is a classical solution of Lagrangian system
\begin{gather*}
\frac{\mathrm{d}}{\mathrm{d}t}\left(g(x)\dot{x}\right)=\frac{\partial}{\partial x}\left[\left(g(x)\dot{x},\dot{x}\right)+W\left(\varphi+t\omega,x\right)\right]
\end{gather*}
with $\varphi=\sum_{i=1}^{k}y_{i}\varepsilon_{i}\in K$, i.e., of
system generated on the map $\mathcal{U}$ by Lagrangian density~\eqref{eq:quat_Lagr_phi}.
It remains only to observe that the orthogonal transformation which
assigns to a point $(y_{1}\text{,}\ldots,y_{k})\in Q^{-1}(K)$ the
point $\varphi=\sum_{i=1}^{k}y_{i}\varepsilon_{i}\in K$ is measure-preserving.
\end{proof}
Let us fix a point $\varphi_{0}\in\mathbb{T}^{k}$ in such a way that
the function $t\mapsto x(t):=u(\varphi_{0}+t\omega)$ be a classical
solution from Proposition~\ref{prop:quat_classical}. If the metric
$\left\langle \cdot,\cdot\right\rangle $ were Euclidean, i.e., $g(x)\equiv\mathrm{const}$,
then the immediate consequence of Landau inequality~\cite{Lan13}
would be the boundedness of solution derivative on whole real axis.
However, in general case the substantiation of boundedness for the
derivative needs another approach.
\begin{prop}
\label{prop:quat_bound_deriv}The function $\left\Vert \dot{x}(\cdot)\right\Vert $
is bounded on real axis.\end{prop}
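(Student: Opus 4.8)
The plan is to pair a covariant version of Landau's inequality with a coercive monotone scalar built from the auxiliary function $V$. The Landau estimate by itself yields only a Lipschitz modulus for the speed; the auxiliary function supplies the missing coercivity that forces genuine boundedness.

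First I would differentiate the speed. Since $x(\cdot)$ is a classical solution, \eqref{eq:quat_LagrEquat} gives $\frac{d}{dt}\|\dot x\|^2=2\langle\nabla_{\dot x}\dot x,\dot x\rangle=2\langle\nabla W(t\omega,x),\dot x\rangle$, and as $x(t)$ remains in the compact set $\mathcal K\subset\Omega$ of Proposition~\ref{prop:quat_classical}, the factor $\|\nabla W\|$ is bounded by some $M$. Hence $\bigl|\frac{d}{dt}\|\dot x\|\bigr|\le M$, i.e. $\|\dot x(\cdot)\|$ is $M$-Lipschitz on $\mathbb R$. This is the exact analogue of Landau's first estimate; on its own it merely propagates a large value of $\|\dot x\|$ over a long interval, which explains why the ambient route is useless here: the normal part of $\frac{d}{dt}\iota_\ast\dot x=\iota_\ast\nabla W+\mathrm{II}(\dot x,\dot x)$ (with $\mathrm{II}$ the second fundamental form) grows like $\|\dot x\|^2$.

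The decisive step is a coercive quantity. I would set $P(t):=\frac{d}{dt}\,\mathrm e^{V(x(t))}=\mathrm e^{V}\langle\nabla V,\dot x\rangle$. Using \eqref{eq:quat_LagrEquat} and the identity $\mathrm{Hess}(\mathrm e^{V})(\xi,\xi)=\mathrm e^{V}\bigl(\langle H_V\xi,\xi\rangle+\langle\nabla V,\xi\rangle^2\bigr)$ one computes $\dot P=\mathrm e^{V}\bigl[\langle H_V\dot x,\dot x\rangle+\langle\nabla V,\dot x\rangle^2+\langle\nabla V,\nabla W\rangle\bigr]$. The bracketed quadratic form dominates $\mu_V\|\dot x\|^2$, since the definition of $\mu_V$ yields $\langle H_V\xi,\xi\rangle\ge\mu_V\|\xi\|^2+\frac12\langle\nabla V,\xi\rangle^2$; and $\langle\nabla V,\nabla W\rangle$ is bounded on $\mathbb T^k\times\mathcal K$. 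This produces a differential inequality $\dot P\ge c\|\dot x\|^2-C_0$ with $c>0$, \emph{provided} the relevant Hessian is uniformly positive on $\mathcal K$. That positivity is precisely the convexity encoded in Conditions \textbf{C1}--\textbf{C2} and in the nonpositiveness of the sectional curvature of $(\mathcal M,\langle\cdot,\cdot\rangle_V)$ exploited in proving Theorem~\ref{thm:quat_1}. I expect its verification to be the main obstacle: if $\mu_V$ is not itself positive on $\mathcal K$, one must replace $\mathrm e^{V}$ by a uniformly $\nabla^V$-convex function furnished by the convex, nonpositively curved geometry of $(\bar\Omega,\langle\cdot,\cdot\rangle_V)$, and then control the connection-difference terms of order $\|\dot x\|^2$ that arise from passing between $\nabla^V$ and $\nabla$.

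Granting $\dot P\ge c\|\dot x\|^2-C_0$ together with the trivial bound $|P|\le C_1\|\dot x\|$, where $C_1=\sup_{\mathcal K}\mathrm e^{V}\|\nabla V\|$, the conclusion follows by a quadratic-beats-linear argument. Suppose $\|\dot x\|$ were unbounded and choose $t_\ast$ with $R:=\|\dot x(t_\ast)\|$ as large as we please. By the $M$-Lipschitz property, $\|\dot x\|\ge R/2$ throughout $J:=[t_\ast-\tfrac12,t_\ast+\tfrac12]$ once $R\ge M$. Integrating the differential inequality over $J$ and applying the fundamental theorem of calculus to the left-hand side gives $\tfrac{c}{4}R^2-C_0\le\int_J\dot P\,dt=P\big|_{\partial J}\le 2C_1\sup_J\|\dot x\|\le 2C_1\bigl(R+\tfrac{M}{2}\bigr)$. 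As $R\to\infty$ the left side grows quadratically while the right side grows linearly, a contradiction; hence $\|\dot x(\cdot)\|$ is bounded, with an explicit bound obtained by solving the resulting quadratic inequality.
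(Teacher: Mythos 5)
Your overall strategy (a Lipschitz bound on $\left\Vert \dot{x}\right\Vert$ from \eqref{eq:quat_LagrEquat} plus a coercive monotone scalar, closed by a quadratic-beats-linear integration) is sound in structure, and your computation $\dot{P}=\mathrm{e}^{V}\bigl[\left\langle H_{V}\dot{x},\dot{x}\right\rangle +\left\langle \nabla V,\dot{x}\right\rangle ^{2}+\left\langle \nabla V,\nabla W\right\rangle \bigr]$ is correct. But the gap is exactly where you suspect it, and it cannot be closed from the stated hypotheses: the coercivity $\left\langle H_{V}\xi,\xi\right\rangle +\left\langle \nabla V,\xi\right\rangle ^{2}\ge c\left\Vert \xi\right\Vert ^{2}$ on $\bar{\Omega}$ does not follow from \textbf{C1}--\textbf{C2}. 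Condition \textbf{C1} gives $2\lambda_{V}+\left\Vert \nabla V\right\Vert ^{2}>0$, which is precisely positivity of the $\nabla^{V}$-Hessian of $\mathrm{e}^{V}$, i.e.\ uniform convexity of $\mathrm{e}^{V}$ along geodesics of the conformal metric; but $x(t)$ is not such a geodesic, and for $\xi\perp\nabla V$ your quadratic form degenerates to $\left\langle H_{V}\xi,\xi\right\rangle \ge\lambda_{V}\left\Vert \xi\right\Vert ^{2}$, where \textbf{C1} only guarantees $\lambda_{V}>-\frac{1}{2}\left\Vert \nabla V\right\Vert ^{2}$. Condition \textbf{C2} gives $\mu_{V}>2K^{\ast}$, which yields $\mu_{V}>0$ only if $K^{\ast}\ge0$; since the point of this framework is to avoid sign assumptions on the curvature, $K^{\ast}$ may well be negative and $\mu_{V}$ negative with it. Your proposed repair (replace $\mathrm{e}^{V}$ by a uniformly $\nabla^{V}$-convex function) reintroduces, through $\nabla_{\dot{x}}^{V}\dot{x}-\nabla_{\dot{x}}\dot{x}=\left\langle \nabla V,\dot{x}\right\rangle \dot{x}-\frac{1}{2}\left\Vert \dot{x}\right\Vert ^{2}\nabla V$, sign-indefinite terms of the very quadratic order you are trying to dominate, so the obstacle is structural, not technical.

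The paper takes an entirely different route that sidesteps pointwise convexity of a scalar along the orbit. It observes that $\xi(t)=\dot{x}(t)$ itself satisfies a second-order \emph{linear} system along $x(t)$, namely \eqref{eq:quat_varsysforxi}, where a curvature term $-r(\left\Vert \dot{x}\right\Vert )R(\dot{x},\xi)\dot{x}$ is added for free (since $R(\xi,\xi)=0$) with a cutoff $r(s)=O(s^{-2})$ chosen so that the coefficients stay bounded even where $\left\Vert \dot{x}\right\Vert$ is large. After parallel translation this becomes $\ddot{y}=A(t)y+h(t)$ with bounded $A$ and $h$; a Lyapunov quadratic form built from $V$ --- this is where \eqref{eq:conv_mainineq} enters, as the discriminant condition making a form with a cross term in $\left\Vert \nabla_{\xi}\eta\right\Vert$ and $\left\Vert \xi\right\Vert \left\Vert \eta\right\Vert$ positive definite --- shows that the homogeneous system \eqref{eq:quat_sysvary} is exponentially dichotomous on $\mathbb{R}$. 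Every solution other than the unique bounded one then grows exponentially, while $y_{\ast}(t)=\Xi_{t}^{0}\dot{x}(t)$ grows at most linearly by \eqref{eq:quat_esty(t)} (your Lipschitz observation is essentially this estimate); hence $y_{\ast}$ must be the bounded solution, and $\left\Vert \dot{x}\right\Vert =\left\Vert y_{\ast}\right\Vert$ is bounded. To salvage your scalar approach you would need to assume in addition $\mu_{V}>0$ (or $K^{\ast}\ge0$) on $\bar{\Omega}$, which the proposition does not grant you.
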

\begin{proof}
The vector field $\xi(t):=\dot{x}(t)$ along the curve $\gamma$
defined by equation $x=x(t)$, $t\in\mathbb{R}$, satisfies the
identity
\begin{gather*}
\nabla_{\dot{x}(t)}\xi(t)\equiv\nabla W(\varphi_{0}+t\omega,x(t))\text{.}
\end{gather*}
By applying the derivative $\nabla_{\dot{x}(t)}$ to both sides, one
ascertains that $\xi(\cdot)$ is a solution of linear inhomogeneous
system
\begin{gather*}
\nabla_{\dot{x}}^{2}\xi=H_{W}(\varphi_{0}+t\omega,x(t))\xi+\left[\nabla\frac{\partial}{\partial t}W(\varphi_{0}+t\omega,x)\right]_{x=x(t)},
\end{gather*}
and at the same time satisfies the system
\begin{gather}
\nabla_{\dot{x}}^{2}\xi=H_{W}(\varphi_{0}+t\omega,x(t))\xi-r(\left\Vert \dot{x}(t)\right\Vert )R(\dot{x}(t),\xi)\dot{x}(t)+\left[\nabla\frac{\partial}{\partial t}W(\varphi_{0}+t\omega,x)\right]_{x=x(t)},\label{eq:quat_varsysforxi}
\end{gather}
where $R$ is the curvature tensor for Levi--Civita connection, and $r(\cdot):\mathbb{R}_{+}\mapsto\mathbb{R}_{+}$
is an arbitrary continuous function (it is sufficiently to observe
that $R(\xi,\xi)=0$). If we require that $r(s)=O(s^{-2})$ for $\left|s\right|\to\infty$,
then the right-hand side of the system~\eqref{eq:quat_varsysforxi}
will be bounded with respect to $t\in\mathbb{R}$ for any field $\xi(t)$
bounded by norm.

Let $\Xi_{s}^{t}$ be an evolution operator for the linear system
$\nabla_{\dot{x}(t)}\xi=0$. By means of this operator, the parallel translation
of vectors along $\gamma$ is carried out. Namely, for any vector
$\xi_{s}\in T_{x(s)}\mathcal{M}$ the result of its parallel translation
from a point $x(s)$ to point $x(t)$ along the curve $\gamma$ is
the vector $\Xi_{s}^{t}\xi_{s}$. Under the parallel translation the
dot product for pair of vectors stays the same. Therefore the operator
$\Xi_{s}^{t}$ is orthogonal with respect to metric $\left\langle \cdot,\cdot\right\rangle $.

Set $y_{\ast}(t):=\Xi_{t}^{0}\dot{x}(t)\equiv\Xi_{t}^{0}\xi(t)$.
Since
\begin{gather*}
\nabla_{\dot{x}}\xi(t)=\lim_{s\to0}\frac{1}{s}\left[\Xi_{t+s}^{t}\xi(t+s)-\xi(t)\right]=\lim_{s\to0}\frac{1}{s}\left[\Xi_{t+s}^{t}\Xi_{0}^{t+s}y_{\ast}(t+s)-\Xi_{0}^{t}y_{\ast}(t)\right]=\Xi_{0}^{t}\dot{y}_{\ast}(t)\text{,}
\end{gather*}
then, taking into account~\eqref{eq:quat_LagrEquat}, we have
\begin{gather*}
\dot{y}_{\ast}(t)=\Xi_{t}^{0}\nabla W(\varphi_{0}+t\omega,x(t)),
\end{gather*}
and from this it follows, in particular, that
\begin{equation}
\left\Vert y_{\ast}(t)\right\Vert =O(\left|t\right|),\quad\left|t\right|\to\infty.\label{eq:quat_esty(t)}
\end{equation}
Further, since the vector field $\xi(t)$ along the curve $\gamma$
satisfies the system~\eqref{eq:quat_varsysforxi} and
\[
\nabla_{\dot{x}}^{2}\Xi_{0}^{t}y_{\ast}(t)=\nabla_{\dot{x}}\Xi_{0}^{t}\dot{y}_{\ast}(t)=\Xi_{0}^{t}\ddot{y}_{\ast}(t),
\]
 then $y_{\ast}(t)$ is a solution of linear inhomogeneous system
\begin{gather}
\ddot{y}=A(t)y+h(t)\label{eq:quat_varsysfory}
\end{gather}
where
\begin{gather*}
A(t)y:=\Xi_{t}^{0}\left[H_{W}(\varphi_{0}+t\omega,x(t))\Xi_{0}^{t}y-r(\left\Vert \xi(t)\right\Vert )R(\xi(t),\Xi_{0}^{t}y)\xi(t)\right],\\
h(t):=\Xi_{t}^{0}\left[\nabla\frac{\partial}{\partial t}W(\varphi_{0}+t\omega,x)\right]_{x=x(t)}.
\end{gather*}

Consider now the corresponding homogeneous systems
\begin{gather}
\nabla_{\dot{x}}^{2}\eta=H_{W}(\varphi_{0}+t\omega,x(t))\eta-r(\left\Vert \xi(t)\right\Vert )R(\xi(t),\eta)\xi(t)\text{,}\label{eq:quat_sysvareta}\\
\ddot{y}=A(t)y\label{eq:quat_sysvary}
\end{gather}
and show that under appropriate choice of function $r(\cdot)$ the
last ones are exponentially dichotomic. Define the function
\[
\mathcal{F}(t,\eta,\nabla_{\xi(t)}\eta):=\left\langle \nabla_{\xi(t)}\eta,\eta\right\rangle +\frac{r(\left\Vert \xi(t)\right\Vert )\left\Vert \eta\right\Vert ^{2}}{2}\left\langle \nabla V(x(t)),\xi(t)\right\rangle .
\]
By calculating its derivative along solutions
of the system~\eqref{eq:quat_sysvareta}, one can establish the following inequality:
\begin{gather*}
\frac{\mathrm{d}}{\mathrm{d}t}\mathcal{F}(t,\eta,\nabla_{\xi}\eta)
\ge\left\langle H_{W}\eta,\eta\right\rangle +\frac{r(\left\Vert \xi\right\Vert )\left\Vert \eta\right\Vert ^{2}}{2}\left\langle \nabla W,\nabla V\right\rangle +\\
+\left\Vert \nabla_{\xi}\eta\right\Vert ^{2}-r(\left\Vert \xi\right\Vert )\left\Vert \xi\right\Vert \left\Vert \nabla_{\xi}\eta\right\Vert \left\Vert \eta\right\Vert \left|\left\langle \nabla V,\varepsilon\right\rangle \right|+\frac{r(\left\Vert \xi\right\Vert )\left\Vert \xi\right\Vert ^{2}\left\Vert \eta\right\Vert ^{2}}{2}\left[\left\langle H_{V}\varepsilon,\varepsilon\right\rangle -2K^{\ast}\right]-\\
-\frac{1}{2}\left|r^{\prime}(\left\Vert \xi\right\Vert )\right|\left\Vert \xi\right\Vert \left|\left\langle \nabla W,\varepsilon\right\rangle \left\langle \nabla V,\varepsilon\right\rangle \right|\left\Vert \eta\right\Vert ^{2},
\end{gather*}
where $\varepsilon:=\xi/\left\Vert \xi\right\Vert $. Now we set
\begin{gather*}
r(s):=\begin{cases}
1, & s\in[0,B],\\
B^{2}/s^{2}, & s>B,
\end{cases}
\end{gather*}
where $B>1$, and assign $\left\Vert \nabla_{\xi}\eta\right\Vert =z_{1}$,
$\left\Vert \eta\right\Vert =z_{2}$. Then on the set of those $t$
for which $\left\Vert \xi(t)\right\Vert \le B$ we have
\begin{gather*}
\frac{\mathrm{d}}{\mathrm{d}t}\mathcal{F}(t,\eta,\nabla_{\xi(t)}\eta)\ge\left[\lambda_{W}+\frac{1}{2}\left\langle \nabla W,\nabla V\right\rangle \right]z_{2}^{2}+\\
+z_{1}^{2}-\left|\left\langle \nabla V,\varepsilon\right\rangle \right|z_{1}\left\Vert \xi\right\Vert z_{2}+\frac{\left\Vert \xi\right\Vert ^{2}z_{2}^{2}}{2}\left[\left\langle H_{V}\varepsilon,\varepsilon\right\rangle -2K^{\ast}\right]\text{.}
\end{gather*}
It is easily seen that the fulfillment of conditions of Theorem~\ref{thm:quat_1}
assures the existence of positive $\alpha_{1}$ and $\alpha_{2}$
such that
\begin{gather*}
\lambda_{W}+\frac{1}{2}\left\langle \nabla W,\nabla V\right\rangle \ge\alpha_{1}\\
z_{1}^{2}-\left|\left\langle \nabla V,\varepsilon\right\rangle \right|z_{1}\left\Vert \xi\right\Vert z_{2}+\left[\frac{1}{2}\left\langle H_{V}\varepsilon,\varepsilon\right\rangle -K^{\ast}\right]\left\Vert \xi\right\Vert ^{2}z_{2}^{2}\ge\alpha_{2}\left(z_{1}^{2}+\left\Vert \xi\right\Vert ^{2}z_{2}^{2}\right),
\end{gather*}
from what it follows that
\begin{gather*}
\frac{\mathrm{d}}{\mathrm{d}t}\mathcal{F}(t,\eta,\nabla_{\xi}\eta)\ge\alpha_{1}\left\Vert \eta\right\Vert ^{2}+\alpha_{2}\left\Vert \nabla_{\xi}\eta\right\Vert ^{2}
\end{gather*}
if $\left\Vert \xi(t)\right\Vert \le B$.

Set
\begin{gather*}
C:=\max\left\{ \max_{x\in\bar{\Omega}}\left\Vert \nabla V(x)\right\Vert ,\max_{(\varphi,x)\in\mathbb{T}^{k}\times\bar{\Omega}}\left\Vert \nabla W(\varphi,x)\right\Vert ,\max_{(\varphi,x)\in\mathbb{T}^{k}\times\bar{\Omega}}\left\Vert H_{W}(\varphi,x)\right\Vert \right\}
\end{gather*}
and choose $B$ so large that $\alpha_{2}B^{2}\ge1+C(1+3C/2)$. Then
on the set of those $t$ for which $\left\Vert \xi(t)\right\Vert >B$
we get
\begin{gather*}
\frac{\mathrm{d}}{\mathrm{d}t}\mathcal{F}(t,\eta,\nabla_{\xi}\eta)\ge\left\Vert \nabla_{\xi}\eta\right\Vert ^{2}-B\left|\left\langle \nabla V,\varepsilon\right\rangle \right|\left\Vert \nabla_{\xi}\eta\right\Vert \left\Vert \eta\right\Vert +\frac{B^{2}\left\Vert \eta\right\Vert ^{2}}{2}\left[\left\langle H_{W}\varepsilon,\varepsilon\right\rangle -2K^{\ast}\right]-\\
-\left[C+3C^{2}/2\right]\left\Vert \eta\right\Vert ^{2}\ge\alpha_{2}(z_{1}^{2}+B^{2}z_{2}^{2})-\left[C+3C^{2}/2\right]z_{2}^{2}\ge\alpha_{2}z_{1}^{2}+z_{2}^{2}.
\end{gather*}
From this it follows that there exists $\alpha>0$ such that the derivative
of quadratic form of variables $y$, $\dot{y}$
\[
\mathcal{F}(t,\Xi_{0}^{t}y,\Xi_{0}^{t}\dot{y})=\left\langle \dot{y},y\right\rangle +\frac{r(\left\Vert \xi(t)\right\Vert )}{2}\left\langle \nabla V(x(t)),\xi(t)\right\rangle \left\Vert y\right\Vert ^{2}
\]
 along solutions of the system~\eqref{eq:quat_sysvary} satisfies
the inequalities
\begin{gather*}
\frac{\mathrm{d}}{\mathrm{d}t}\mathcal{F}(t,\Xi_{0}^{t}y,\Xi_{0}^{t}\dot{y})\ge\alpha\left[\left\Vert \Xi_{0}^{t}y\right\Vert ^{2}+\left\Vert \Xi_{0}^{t}\dot{y}\right\Vert ^{2}\right]=\alpha\left[\left\Vert y\right\Vert ^{2}+\left\Vert \dot{y}\right\Vert ^{2}\right]
\end{gather*}
 and hence is positive definite. At the same time the form $\mathcal{F}(t,\Xi_{0}^{t}y,\Xi_{0}^{t}\dot{y})$
is, obviously, nondegenerate and has bounded coefficients. As it follows
from~\cite{Sam01}, the existence of quadratic form with such properties
guarantees the exponential dichotomy of liner system~\eqref{eq:quat_sysvary}
on the whole real axis, and the inhomogeneous system~\eqref{eq:quat_varsysfory}
in view of boundedness of $\left\Vert h(t)\right\Vert $ possesses
a unique bounded solution. Furthermore, any other solution of this
system exponentially increases either for $t\to\infty$, or for $t\to-\infty$.
Since above we have established the estimate~\eqref{eq:quat_esty(t)},
then the solution $y_{\ast}(t)$ is bounded on $\mathbb{R}$. Thus
the function $\left\Vert \dot{x}(t)\right\Vert =\left\Vert \Xi_{0}^{t}y_{\ast}(t)\right\Vert =\left\Vert y_{\ast}(t)\right\Vert $
is bounded on~$\mathbb{R}$.\end{proof}
\begin{prop}
\label{prop:quat_uniqBS} If the conditions of Theorem~\ref{thm:quat_1}
holds true, then for any $\varphi\in\mathbb{T}^{k}$ the system with
Lagrangian density \eqref{eq:quat_Lagr_phi} cannot have more than
one solution $x(\cdot):\mathbb{R}\mapsto\mathcal{K}$ such that $\sup_{t\in\mathbb{R}}\left\Vert \dot{x}(t)\right\Vert <\infty$,
where $\mathcal{K}\subset\Omega$ is the compact set from Proposition~\ref{prop:quat_classical}. \end{prop}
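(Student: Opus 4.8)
The plan is to argue by contradiction, using the strict convexity \eqref{eq:quat_convexL} of the Lagrangian density along the connecting map $\chi$ together with the exponential dichotomy extracted in the proof of Proposition~\ref{prop:quat_bound_deriv}. Suppose $x_1(\cdot)$ and $x_2(\cdot)$ are two solutions with values in $\mathcal K\subset\Omega$ and $\sup_t\|\dot x_i(t)\|<\infty$. For each $t$ I would form the conformal geodesic $z(s,t):=\chi(s,x_1(t),x_2(t))$, $s\in[0,1]$, joining $x_1(t)$ to $x_2(t)$ inside $\Omega$ (its existence and uniqueness being furnished by the first part of the proof of Theorem~\ref{thm:quat_1}), and set $\xi(s,t):=\partial_s z(s,t)$, $\eta(s,t):=\partial_t z(s,t)$. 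Since $x_1,x_2$ take values in the compact set $\mathcal K$ and $\chi$ is smooth, $\|\xi(s,t)\|$ is bounded uniformly in $(s,t)$, and it vanishes at a given $t$ exactly when $x_1(t)=x_2(t)$.

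First I would fix $t_0<t_1$ and consider $g(s):=\int_{t_0}^{t_1}L(\varphi+t\omega,z(s,t),\partial_t z(s,t))\,dt$. By \eqref{eq:quat_convexL} the function $g$ is convex, with $g''(s)\ge\varkappa\int_{t_0}^{t_1}\|\xi(s,t)\|^2\,dt$. Because $x_1,x_2$ are extremals, integrating the first variation $g'(s)$ by parts in $t$ annihilates the interior Euler--Lagrange term at $s=0$ and $s=1$, leaving only the endpoint contributions $g'(0)=\bigl[\langle\dot x_1(t),\xi(0,t)\rangle\bigr]_{t_0}^{t_1}$ and $g'(1)=\bigl[\langle\dot x_2(t),\xi(1,t)\rangle\bigr]_{t_0}^{t_1}$. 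The bounds $\sup\|\dot x_i\|<\infty$ and the uniform boundedness of $\|\xi(\cdot,\cdot)\|$ make $|g'(0)|$ and $|g'(1)|$ bounded by a constant $\Lambda$ independent of $t_0,t_1$. Hence $\varkappa\int_0^1\!\int_{t_0}^{t_1}\|\xi(s,t)\|^2\,dt\,ds\le g'(1)-g'(0)\le 2\Lambda$ for all $t_0<t_1$, so the separation is square-integrable on the whole line, $\int_0^1\!\int_{\mathbb R}\|\xi(s,t)\|^2\,dt\,ds<\infty$.

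It then remains to upgrade square-integrability to identical vanishing. Working in the chart $\mathcal U\cong\Omega$ of Proposition~\ref{prop:quat_classical}, the difference $w:=x_1-x_2$ satisfies a homogeneous linear second-order equation $\ddot w=\bar A(t)w+\bar B(t)\dot w$ obtained by applying the mean-value (integral Taylor) formula to the right-hand side of the Euler--Lagrange equation; its coefficients are bounded since $x_i\in\mathcal K$ and $\sup\|\dot x_i\|<\infty$. The crucial point is that the averaged Hessian $\int_0^1 H_W(\varphi,x_2+\theta w)\,d\theta$ and the accompanying curvature terms still satisfy the pointwise inequalities of Theorem~\ref{thm:quat_1} and condition \textbf{C2}, because these hold at every point of $\bar\Omega$ and hence survive averaging over the segment $z(\cdot,t)\subset\Omega$. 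Consequently the very Lyapunov quadratic form $\mathcal F$ built in the proof of Proposition~\ref{prop:quat_bound_deriv} has, along solutions of the $w$-equation, a positive-definite derivative $\tfrac{d}{dt}\mathcal F\ge\alpha(\|w\|^2+\|\dot w\|^2)$ while remaining bounded and nondegenerate; by \cite{Sam01} this forces the $w$-equation to be exponentially dichotomic on $\mathbb R$. Since an exponentially dichotomic homogeneous linear system admits only the trivial solution bounded on the whole axis, and $w$ (together with $\dot w$) is bounded, I conclude $w\equiv0$, i.e. $x_1\equiv x_2$.

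The hard part is this last step, specifically the verification that the dichotomy-certifying positivity of $\mathcal F$ is preserved under the mean-value averaging producing the linear equation for the difference. This is exactly where the uniform, pointwise-on-$\bar\Omega$ character of the strict inequalities of Theorem~\ref{thm:quat_1} and of the curvature bound in \textbf{C2} is indispensable: each averaged coefficient inherits the same lower bounds as the pointwise ones, so the estimates of Proposition~\ref{prop:quat_bound_deriv} carry over verbatim. Equivalently and more invariantly, one may package the averaging through the connecting map $\chi$ itself, since the convexity \eqref{eq:quat_convexL} already encodes the curvature-corrected positivity needed to build $\mathcal F$ directly along the separation field $\xi$; then the square-integrable $\xi$ from the first step, being also bounded, is annihilated by the dichotomy. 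Either route yields $x_1\equiv x_2$, which is the asserted uniqueness.
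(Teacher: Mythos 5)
Your first step is sound and is in substance identical to the paper's: the quantity $g'(1)-g'(0)$ you estimate is exactly $l(t_{1})-l(t_{0})$ for the function $l(t)=\left\langle \dot{x}_{2}(t),\xi(1,t)\right\rangle -\left\langle \dot{x}_{1}(t),\xi(0,t)\right\rangle$ used in the paper, and both computations yield the convergence of $\int_{\mathbb{R}}\int_{0}^{1}\left\Vert \xi(s,t)\right\Vert ^{2}\mathrm{d}s\,\mathrm{d}t$ as in \eqref{eq:quat_converg_int}. The gap is in your second step. The paper never linearizes: it notes that $\dot{l}(t)\ge\varkappa\int_{0}^{1}\left[\left\Vert \nabla_{\xi}\eta\right\Vert ^{2}+\left(\left\Vert \eta\right\Vert ^{2}+1\right)\left\Vert \xi\right\Vert ^{2}\right]\mathrm{d}s\ge0$, so $l$ is nondecreasing and bounded; that $\chi_{s}^{\prime}(0,x,y)\ne0$ for $x\ne y$ forces $\xi(0,\cdot)\not\equiv0$ and hence the strict inequalities $\limsup_{t\to-\infty}l(t)<l(0)<\liminf_{t\to\infty}l(t)$; and that the square-integrability of $\xi$ yields sequences $t_{k}^{\pm}\to\pm\infty$ along which $\xi(0,t_{k}^{\pm})$ and $\xi(1,t_{k}^{\pm})$ tend to zero, whence $l(t_{k}^{\pm})\to0$ --- incompatible with the strict inequalities whichever sign $l(0)$ has. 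You should finish this way; it requires nothing beyond what your first step already provides.

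Your proposed replacement does not close. First, the mean-value linearization $\ddot{w}=\bar{A}(t)w+\bar{B}(t)\dot{w}$ averages coefficients along the straight segment $x_{2}+\theta w$ in the chart $\mathcal{U}$, which is only star-shaped (it is diffeomorphic to $\mathcal{Z}_{x}$), not convex, so that segment need not stay in $\mathcal{U}$ and the pointwise inequalities of Theorem~\ref{thm:quat_1} and condition \textbf{C2} are simply not available along it. Second, even where they are, the Lyapunov form $\mathcal{F}$ of Proposition~\ref{prop:quat_bound_deriv} is built from $\left\langle \nabla V(x(t)),\xi(t)\right\rangle$ along one specific trajectory; for the difference equation there is no single base trajectory, and positive definiteness of $\frac{\mathrm{d}}{\mathrm{d}t}\mathcal{F}$ along solutions is not a condition linear in the coefficient matrices in a way that survives averaging, so ``the estimates carry over verbatim'' is unjustified. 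If instead you average along $\chi$, the resulting identity $W_{x}^{\prime}(\varphi,x_{1})-W_{x}^{\prime}(\varphi,x_{2})=\int_{0}^{1}H_{W}(\varphi,z(s,t))\xi(s,t)\mathrm{d}s+\ldots$ is not linear in $w=x_{1}-x_{2}$, because $\xi(s,t)$ is not $w$. Finally, your fallback --- that the dichotomy ``annihilates'' the bounded square-integrable field $\xi$ --- fails because $\xi(0,\cdot)$ is determined by both endpoints and does not satisfy the system in variations along $x_{1}$, so the dichotomy of that system says nothing about it.
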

\begin{proof}
Use reductio ad absurdum: suppose that there exists a pair of different
solutions $x_{i}(\cdot):\mathbb{R}\mapsto\mathcal{K}$ such that $\sup_{t\in\mathbb{R}}\left\Vert \dot{x}_{i}(t)\right\Vert <\infty$,
$i=1,2$. Making use of connecting mapping $\chi$ (see the end of
Sect.2) and introducing the notations
\begin{gather*}
\eta(s,t):=\frac{\partial}{\partial t}\chi\left(s,x_{1}(t),x_{2}(t)\right),\quad\xi(s,t):=\frac{\partial}{\partial s}\chi\left(s,x_{1}(t),x_{2}(t)\right),
\end{gather*}
define the function
\begin{gather*}
l(t):=\left\langle \eta(s,t),\xi(s,t)\right\rangle \Bigl|_{s=0}^{s=1}=\\
=\left\langle \dot{x}_{2}(t),\xi(1,t)\right\rangle -\left\langle \dot{x}_{1}(t),\xi(0,t)\right\rangle
\end{gather*}
This function is bounded on $\mathbb{R}$, and it turns out that its derivative can
be represented in the form
\begin{gather*}
\dot{l}(t)
=\frac{\partial}{\partial s}L\left(\varphi+t\omega,\chi\left(s,x_{1}(t),x_{2}(t)\right),\eta(s,t)\right)\Bigl|_{s=0}^{s=1}=\\
=\intop_{0}^{1}\frac{\partial^{2}}{\partial s^{2}}L\left(\varphi+t\omega,\chi\left(s,x_{1}(t),x_{2}(t)\right),\eta(s,t)\right)\mathrm{d}s.
\end{gather*}
Taking into account~\eqref{eq:quat_convexL} we have
\begin{gather*}
\dot{l}(t)\ge\varkappa\intop_{0}^{1}\left[\left\Vert \nabla_{\xi}\eta\right\Vert ^{2}+\left(\left\Vert \eta\right\Vert ^{2}+1\right)\left\Vert \xi\right\Vert ^{2}\right]\mathrm{d}s.
\end{gather*}
Thus, the function $l(\cdot)$ is nondecreasing and its boundedness
assures convergence of the integrals
\begin{gather}
\intop_{-\infty}^{0}\intop_{0}^{1}\left\Vert \xi(s,t)\right\Vert ^{2}\mathrm{d}s\mathrm{d}t<\infty,\quad\intop_{0}^{\infty}\intop_{0}^{1}\left\Vert \xi(s,t)\right\Vert ^{2}\mathrm{d}s\mathrm{d}t<\infty.\label{eq:quat_converg_int}
\end{gather}

Note that $\chi_{s}^{\prime}(0,x,y)\ne0$ once $x\ne y$. In fact,
otherwise the mapping $s\mapsto\chi(s,x,y)$ would be a solution of
equation~\eqref{eq:quat_prop_chi} which satisfies the initial conditions
$\chi\bigl|_{s=0}=x$, $\chi_{s}^{\prime}\bigl|_{s=0}=0$, but only
constant solution $\chi(s,x,y)\equiv x$ has such a property. This
contradicts the equality $\chi(1,s,y)=y$. Taking into account that
the equality $x_{1}(t)=x_{2}(t)$ can be valid only on a discrete
set, we have $\xi(0,t)\not\equiv0$. But then there hold  the strict
inequalities
\begin{gather}
\limsup_{t\to-\infty}l(t)<l(0)<\liminf_{t\to\infty}l(t).\label{eq:quat_liminfsup_l}
\end{gather}

It is not hard to show that the second inequality~\eqref{eq:quat_converg_int}
assures the existence of sequence $t_{k}^{+}\to\infty$ such that
$\max\left\{ \left\Vert \xi(0,t_{k}^{+})\right\Vert ,\left\Vert \xi(1,t_{k}^{+})\right\Vert \right\} \to0$.
In the same way, one can prove that the first inequality in~\eqref{eq:quat_converg_int}
guarantees the existence of sequence $t_{k}^{-}\to-\infty$ along
which $\left\Vert \xi(0,t)\right\Vert $ and $\left\Vert \xi(1,t)\right\Vert $
simultaneously tend to zero. Then taking into account boundedness
of $\left\Vert \dot{x}_{i}(t)\right\Vert $ on $\mathbb{R}$, we get
\begin{gather*}
\lim_{k\to\infty}l(t_{k}^{\pm})=0.
\end{gather*}
This equality contradicts~\eqref{eq:quat_liminfsup_l}.
Namely, the case $l(0)\ge0$ (the case $l(0)<0$) contradicts the
existence of sequence $\left\{ t_{k}^{+}\right\} $ (the existence
of sequence $\left\{ t_{k}^{-}\right\} $).
\end{proof}
\emph{Proof of Theorem~\ref{thm:quat_class_QPS}. }Apply the Amerio
theorem (see, e.g., \cite[p. 437]{Dem67}). On the map $\mathcal{U}\subset\mathbb{E}^{m}$
of domain $\Omega\subset\mathcal{M}$, the system with Lagrangian density
$L(\varphi_{0}+t\omega,x,\dot{x})$ takes the form of a second order
system, which is equivalent to an $2m$-dimensional normal first order
system on phase space $\mathcal{U}\times\mathbb{E}^{m}$. Since components
of frequency vector $\omega$ are rationally independent, then the
so-called $H$-class of system with Lagrangian density $L(\varphi_{0}+t\omega,x,\dot{x})$
is formed by a family of systems with Lagrangian densities $L(\varphi+t\omega,x,\dot{x})$
parametrized by points of torus $\varphi\in\mathbb{T}^{k}$. From
Propositions~\ref{prop:quat_classical}--\ref{prop:quat_uniqBS}
it follows that in $\mathcal{C}\times\mathbb{E}^{m}$ ($\mathcal{C}$
is the image of compact set $\mathcal{K}$ on the map) the system
with Lagrangian density $L(\varphi_{0}+t\omega,x,\dot{x})$ has a
unique classical bounded Besicovitch quasiperiodic solution $t\mapsto u(\varphi_{0}+t\omega)$
 and each system from its $H$-class has a unique bounded solution
in $\mathcal{C}\times\mathbb{E}^{m}.$ Thus, all requirements of the
Amerio theorem are fulfilled and $u(\varphi_{0}+t\omega)$ is a uniformly
almost periodic function. In view of its Fourier series, this function
is uniformly quasiperiodic  with frequency basis $\omega$.
Then $u(\cdot)\in\mathrm{C}\!\left(\mathbb{T}^{k}\!\mapsto\!\mathcal{C}\right)$
and for any $\varphi\in\mathbb{T}^{k}$, in particular for $\varphi=0$,
the function $t\mapsto u(\varphi+t\omega)$ is a classical quasiperiodic
solution of the system with Lagrangian density $L(\varphi+t\omega,x,\dot{x})$.
\begin{thm}
\label{thm:quat_ExpDych}Suppose that the conditions of Theorem~\ref{thm:quat_class_QPS}
holds true and let $x(\cdot)\in\mathrm{C}^{2}\!\left(\mathbb{R}\!\mapsto\!\mathcal{K}\right)$
be a quasiperiodic solution of system with Lagrangian density $L(t\omega,x.\dot{x})$.
Then the system in variations along this solution is exponentially
dichotomic on $\mathbb{R}$. \end{thm}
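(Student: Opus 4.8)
The plan is to recognize that this theorem is essentially the reward for the work already done in Proposition~\ref{prop:quat_bound_deriv}, whose auxiliary machinery was assembled for precisely this purpose. First I would note that linearizing the equation of motion $\nabla_{\dot x}\dot x=\nabla W(t\omega,x)$ along the solution $x(t)=u(t\omega)$ produces exactly the homogeneous system~\eqref{eq:quat_sysvareta} in the special case $r\equiv1$, with $\xi(t)=\dot x(t)$; this is the genuine system in variations, and the factor $r$ was introduced earlier only as a device (exploiting $R(\xi,\xi)=0$) to keep the right-hand side bounded when the velocity was not yet known to be bounded. By Theorem~\ref{thm:quat_class_QPS} the solution $x(\cdot)$ is a classical, uniformly quasiperiodic map into the compact set $\mathcal K\subset\Omega$, and by Proposition~\ref{prop:quat_bound_deriv} its velocity is bounded, so I may set $B_0:=\sup_{t\in\mathbb R}\left\Vert\dot x(t)\right\Vert<\infty$.

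The crux of the argument is then to discard that cutoff. I would choose in Proposition~\ref{prop:quat_bound_deriv} a threshold $B\ge B_0$ large enough that the inequality $\alpha_2B^2\ge1+C(1+3C/2)$ of that proof also holds. With this choice $\left\Vert\xi(t)\right\Vert=\left\Vert\dot x(t)\right\Vert\le B_0\le B$ for every $t$, whence $r(\left\Vert\xi(t)\right\Vert)\equiv1$. Consequently the $r$-modified system~\eqref{eq:quat_sysvareta} and the true system in variations coincide \emph{as time-dependent linear systems}, not merely along one trajectory: the coefficient of~\eqref{eq:quat_sysvareta} depends on $t$ only through $x(t)$ and $\xi(t)$, and the latter never enters the region $\left\Vert\xi\right\Vert>B$ where $r$ deviates from $1$.

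It then remains only to transcribe the dichotomy already obtained. In the chart $\mathcal U$ the covariant system~\eqref{eq:quat_sysvareta} becomes the coordinate system~\eqref{eq:quat_sysvary}, and on the range $\left\Vert\xi(t)\right\Vert\le B$ the computation of Proposition~\ref{prop:quat_bound_deriv} yields $\frac{\mathrm d}{\mathrm dt}\mathcal F(t,\Xi_0^ty,\Xi_0^t\dot y)\ge\alpha_1\left\Vert\eta\right\Vert^2+\alpha_2\left\Vert\nabla_\xi\eta\right\Vert^2$, a bound positive definite in the phase variables $(y,\dot y)$. Since $\mathcal F$ is nondegenerate with bounded coefficients, the criterion of~\cite{Sam01} delivers exponential dichotomy of~\eqref{eq:quat_sysvary} on the whole axis, hence of~\eqref{eq:quat_sysvareta} with $r\equiv1$.

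The point demanding care—and the place I expect the only genuine, if routine, obstacle—is the passage between the intrinsic covariant system in variations and its chart representation~\eqref{eq:quat_sysvary}. Here I would invoke that the parallel-transport operator $\Xi_0^t$ along $\gamma$ is an isometry of $\left\langle\cdot,\cdot\right\rangle$, so that the norm-preserving substitution $\eta=\Xi_0^ty$, $\nabla_{\dot x}\eta=\Xi_0^t\dot y$ carries the exponential splitting of the solution space and the uniform exponential estimates of~\eqref{eq:quat_sysvary} back to the covariant equation without distortion. This establishes that the system in variations along $x(t)$ is exponentially dichotomic on $\mathbb R$.
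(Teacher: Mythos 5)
Your proposal is correct and follows essentially the same route as the paper: derive the system in variations \eqref{eq:quat_sys_var}, observe that boundedness of $\left\Vert \dot{x}(t)\right\Vert$ (Proposition~\ref{prop:quat_bound_deriv}) makes the cutoff $r$ identically $1$ along the trajectory so that \eqref{eq:quat_sysvareta} coincides with the true variational system, and then reuse the quadratic form $\mathcal{F}$ and the criterion of \cite{Sam01} to conclude exponential dichotomy. Your version merely spells out more explicitly the choice of $B$ and the isometric parallel-transport conjugation, both of which are implicit in the paper's appeal to ``the reasonings of Proposition~\ref{prop:quat_bound_deriv} with $r(s)\equiv1$.''
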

\begin{proof}
Let $t\mapsto x(t,s)$, $s\in(-\delta,\delta)$, be a family of solutions
to system with Lagrangian density $L(t\omega,x,\dot{x})$. Let this family smoothly
depends on parameter $s$ and $x(t,0)=x(t)$. Then $\xi(t,s):=x_{t}^{\prime}(t,s)$,
$\eta(t,s):=x_{s}^{\prime}(t,s)$ define vector fields along the mapping
$(t,s)\mapsto x(t,s)$ and there holds the Lagrange equation
\begin{gather*}
\nabla_{\xi(t,s)}\xi(t,s)=\nabla W(t\omega+\varphi,x(t,s))\text{.}
\end{gather*}
Calculate here the covariant derivative $\nabla_{\eta(t,s)}$ from
both sides, take into account the equalities~\cite[Sect. 3.6]{GKM71}
\begin{gather*}
\nabla_{\eta}\xi=\nabla_{\xi}\eta,\quad\nabla_{\eta}\nabla_{\xi}\xi-\nabla_{\xi}\nabla_{\eta}\xi=R(\xi,\eta)\xi\text{,}
\end{gather*}
 where$R$ is curvature tensor f Levi--Civita connection, and set
$s=0,$ $\xi(t):=\xi(t,0)$, $\eta(t):=\eta(t,0)$. Then we arrive
at conclusion that the field $\eta(t)$ satisfies the system in variations
along $x(t)$:
\begin{gather}
\nabla_{\xi(t)}^{2}\eta=\nabla_{\eta}\nabla W(t\omega+\varphi,x(t))-R(\xi(t),\eta)\xi(t).\label{eq:quat_sys_var}
\end{gather}
Since in this case the function $t\mapsto\left\Vert \xi(t)\right\Vert $
is bounded, then we can make use of reasonings from the proof of Proposition~\ref{prop:quat_bound_deriv}
with $r(s)\equiv1$ to show the exponential dichotomy of system~\eqref{eq:quat_sys_var}
\end{proof}
\textbf{4. Concluding remarks. } The authors
 of papers~\cite{BerZha96,BloPen01}  restrict themselves to proof of the fact that classical solutions
generated by weak solutions of certain classes of systems on Euclidean
space are Besicovitch quasiperiodic functions, while by our main result we shows that under conditions of Theorem~\ref{thm:quat_class_QPS} such weak solutions are actually  classical uniformly quasiperiodic ones.  It is also worth to note
that the paper \cite{BloPen01} deals with quasiperiodic systems in
$\mathbb{E}^{m}$ of rather general form
\begin{gather*}
\frac{\mathrm{d}^{p}q}{\mathrm{d}t^{p}}=F\left(t\omega,q,\dot{q},\ldots,\frac{\mathrm{d}^{p-1}q}{\mathrm{d}t^{p-1}}\right),
\end{gather*}
but under boundedness condition of right-hand side with respect to derivatives
$q^{(j)}$, $j=1,\ldots,p-1$. Unfortunately, this condition makes
impossible application of results from \cite{BloPen01}, concerning
classical Besicovitch quasiperiodic solutions, to Lagrangian systems
on Riemannian manifold with nonconstant metric tensor $g(x)$.

Application of our results to concrete mechanical systems requires
constructing the auxiliary function $V(\cdot)$. One of the ways to
find such a function by means of averaged force function
\begin{gather*}
\bar{W}(x):=\frac{1}{(2\pi)^{k}}\intop_{\mathbb{T}^{k}}W(\varphi,x)\mathrm{d}\varphi
\end{gather*}
has been offered in \cite{ParRus12}.

\end{document}